\newtheorem{theorem}{Theorem}
\newtheorem{lemma}{Lemma}
\newtheorem{proposition}{Proposition}
\newtheorem{corollary}{Corollary}
\newtheorem{question}{Question}
\newtheorem{definition}{Definition}
\newtheorem{remark}{Remark}
\newcommand{\expectation}{\ensuremath{\mathbb{E}}}
\newcommand{\set}{\ensuremath{\mathcal}}
\DeclareMathOperator*{\esssup}{ess\,sup}
\newcommand{\dint}{\displaystyle{\int}}
\newcommand{\overbar}[1]{\mkern 1.5mu\overline{\mkern-1.5mu#1\mkern-1.5mu}\mkern 1.5mu}
\begin{document}
\title{\huge{On the R\'{e}nyi Divergence, Joint Range of Relative Entropies,
and a Channel Coding Theorem}}

\author{\vspace*{0.5cm} Igal Sason
\thanks{
I. Sason is  with the Department of Electrical Engineering, Technion--Israel
Institute of Technology, Haifa 32000, Israel (e-mail: sason@ee.technion.ac.il).
This work has been supported by the Israeli Science Foundation, grant~12/12.}
\thanks{The paper has been submitted to the IEEE Transactions on Information
Theory in February 17, 2015.}
\thanks{This work has been presented at the 2015 IEEE International Symposium
on Information Theory, Hong Kong, June 14--19, 2015.}}

\maketitle

\thispagestyle{empty}

\begin{abstract}
This paper starts by considering the minimization of the R\'{e}nyi divergence
subject to a constraint on the total variation distance. Based on the solution
of this optimization problem, the exact locus of the points
$\bigl( D(Q\|P_1), D(Q\|P_2) \bigr)$ is determined when $P_1, P_2, Q$ are arbitrary
probability measures which are mutually absolutely continuous, and the total variation
distance between $P_1$ and $P_2$ is not below a given value. It is further shown that
all the points of this convex region are attained by probability measures which are
defined on a binary alphabet. This characterization yields a geometric interpretation
of the minimal Chernoff information subject to a constraint on the variational distance.
\par This paper also derives an exponential upper bound on the performance of binary linear
block codes (or code ensembles) under maximum-likelihood decoding.
Its derivation relies on the Gallager bounding technique, and it reproduces
the Shulman-Feder bound as a special case. The bound is expressed in terms of the R\'{e}nyi
divergence from the normalized distance spectrum of the code (or the average distance spectrum of
the ensemble) to the binomially distributed distance spectrum of the capacity-achieving
ensemble of random block codes.
This exponential bound provides a quantitative measure of the degradation in performance of
binary linear block codes (or code ensembles) as a function of the deviation of their
distance spectra from the binomial distribution. An efficient use of this bound
is considered.
\end{abstract}

{\bf{Keywords}}:
Chernoff information, distance spectrum, error exponent, maximum-likelihood decoding,
relative entropy, R\'{e}nyi divergence, total variation distance.

\section{Introduction}
\label{section: Introduction}
The R\'{e}nyi divergence, introduced in \cite{Renyientropy}, has been studied so
far in various information-theoretic contexts (and it has been actually used before
it had a name \cite{SGB}).
These include generalized cutoff rates and error exponents for hypothesis testing
(\cite{AlajajiCR, Csiszar95, Shayevitz_ISIT11}), guessing moments
(\cite{AtarM15, ErvenH_ISIT10}), source and channel coding error exponents
(\cite{AtarM15, Gallager_book1968, Mansour09, PolyanskiyV10, SGB}),
strong converse theorems for classes of networks \cite{FongT14},
strong data processing theorems for discrete memoryless channels
\cite{Raginsky13}, bounds for joint source-channel coding \cite{TridenskiZI15},
and one-shot bounds for information-theoretic problems \cite{Warsi13}.

In \cite{Gilardoni10}, Gilardoni derived a Pinsker-type lower bound on the R\'enyi
divergence $D_\alpha(P \| Q)$ for $\alpha \in (0,1)$. In view of the fact that this
lower bound is not tight, especially when the total variation distance $|P-Q|$ is large, this
paper starts by considering the minimization of the R\'{e}nyi divergence $D_\alpha(P \| Q)$,
for an arbitrary $\alpha>0$, subject to a given (or minimal) value of the total variation
distance. Note that the minimization here is taken over all probability measures
with a total variation distance which is not below a given value; this problem differs
from the type of problems studied in \cite{BerendHK14} and \cite{OrdentlichW05},
in connection to the minimization of the relative entropy $D(P\|Q)$ subject to a
minimal value of the total variation distance with a fixed probability measure $Q$.
The solution of this problem generalizes the problem of minimizing the relative entropy
$D(P \| Q)$ subject to a given value of the total variation distance where the latter
is a special case with $\alpha = 1$ (see \cite{FedotovHT_IT03, Gilardoni06, ReidW11}).

One possible way to deal with this problem stems from the fact that the R\'enyi divergence
is a one-to-one transformation of the Hellinger divergence $\mathscr{H}_\alpha(P \| Q)$ where
for $\alpha \in (0,1) \cup (1,\infty)$:
\begin{align}\label{renyimeetshellinger}
D_\alpha(P \| Q ) = \frac1{\alpha -1} \; \log \left( 1 + (\alpha - 1)
\, \mathscr{H}_\alpha(P \| Q) \right)
\end{align}
and $\mathscr{H}_\alpha(P \| Q)$ is an $f$-divergence; since the total variation
distance is also an $f$-divergence, this problem can be viewed as a minimization
of an $f$-divergence subject to a constraint on another $f$-divergence.
The numerical optimization of an $f$-divergence subject to simultaneous
constraints on $f_i$-divergences $(i=1, \ldots , L)$ was recently studied
in \cite{GSS_IT14}, where it has been shown that it suffices to restrict
attention to alphabets of cardinality $L+2$.
In fact, as shown in \cite[(22)]{Vajda_1972}, a binary alphabet suffices
if there is a single constraint (i.e., $L=1$) which is on the total variation
distance. In view of \eqref{renyimeetshellinger}, the same conclusion also
holds when minimizing the R\'enyi divergence subject to a constraint on the
total variation distance. To set notation, the divergences $D(P \| Q),
\, |P-Q|, \, \mathscr{H}_\alpha(P \| Q), \, D_\alpha(P \| Q)$ are defined at
the end of this section, being consistent with the notation in \cite{ISSV15-full paper}
and \cite{Verdu_book}.

This paper treats this minimization problem of the R\'{e}nyi divergence in a
different way. We first generalize the analysis in \cite{FedotovHT_IT03},
which was used for the minimization of the relative entropy subject to a
constraint on the variational distance, for proving that
it suffices to restrict attention to probability measures which are defined on
a binary alphabet. Furthermore, the continuation of the analysis in this paper
relies on the Lagrange duality, and a solution of the Karush-Kuhn-Tucker (KKT)
equations while asserting strong duality for the studied problem. The use of
Lagrange duality further simplifies the computational task of the studied
minimization problem.

As complementary results to the minimization problem studied in this paper,
the reader is referred to \cite[Section~8]{ISSV15-full paper} which provides
upper bounds on the R\'{e}nyi divergence $D_\alpha(P \| Q)$ for an arbitrary
$\alpha \in (0,\infty)$ as a function of either the total variation
distance or relative entropy in case that the relative information is bounded.

The solution of the minimization problem of the R\'{e}nyi divergence,
subject to a constraint on the total variation distance, provides an elegant
way for the characterization of the exact locus of the points
$\bigl( D(Q\|P_1), D(Q\|P_2) \bigr)$ where $P_1$ and $P_2$ are probability
measures whose total variation distance is not below a given value $\varepsilon$,
and $Q$ is an arbitrary probability measure. It is further shown in this paper
that all the points of this convex region can be attained by a triple of probability
measures $(P_1, P_2, Q)$ which are defined on a binary alphabet.

In view of the characterization of the exact locus of these points, a geometric
interpretation is provided in this paper for the minimal Chernoff information
between $P_1$ and $P_2$, denoted by $C(P_1, P_2)$, subject to an $\varepsilon$-separation
constraint on the variational distance between $P_1$ and $P_2$.
It is demonstrated in the following that the intersection point at the boundary of the
locus of $\bigl( D(Q\|P_1), D(Q\|P_2) \bigr)$ and the straight line
$D(Q\|P_1) = D(Q\|P_2)$ is the point whose coordinates are equal to the
minimal value of $C(P_1, P_2)$ under the constraint $|P_1 - P_2| \geq \varepsilon$.
The reader is referred to \cite{YardiKV14}, which relies on the closed-form expression
in \cite[Proposition~2]{Sason_IT15} for the minimization of the constrained Chernoff
information, and which analyzes the problem of channel-code detection by a third-party
receiver via the likelihood ratio test. In the latter problem, a third-party receiver
has to detect the channel code used by the transmitter by observing a large number of
noise-affected codewords; this setup has applications in security or cognitive radios,
or in link adaptation in some wireless technologies.

Since the R\'{e}nyi divergence $D_\alpha(P\|Q)$ forms a generalization of the
relative entropy $D(P\|Q)$, where the latter corresponds to $\alpha=1$, the
approach suggested in this paper for the characterization of the exact locus
of pairs of relative entropies in view of a solution to a minimization problem
of the R\'{e}nyi divergence is analogous to the usefulness of complex analysis
in solving real-valued problems.
We consider the analysis of the considered problem as mathematically pleasing
in its own right. Note, however, that an operational meaning of a special point
at the boundary of this locus has an operational meaning in view of
\cite{YardiKV14} (see the previous paragraph).
The studied problem considered here differs from the study in \cite{HarremoesV_2011}
which considered the joint range of $f$-divergences for pairs (rather than triplets)
of probability measures.

The performance analysis of linear codes under maximum-likelihood (ML) decoding is
of interest for studying the potential performance of these codes under optimal
decoding, and for the evaluation of the degradation in performance that is incurred by
the use of sub-optimal and practical decoding algorithms. The reader is referred
to \cite{Sason_Shamai_FnT} which is focused on this topic.

The second part of this paper derives an exponential upper bound on the performance
of ML decoded binary linear block codes (or code ensembles). Its derivation relies
on the Gallager bounding technique (see \cite[Chapter~4]{Sason_Shamai_FnT},
\cite{Shamai_Sason_IT02}), and it reproduces the Shulman-Feder bound \cite{Shulman_Feder99}
as a special case. The new exponential bound derived in this paper is expressed in
terms of the R\'{e}nyi divergence from the normalized distance spectrum of the code
(or average distance spectrum of the ensemble) to the binomial distribution which
characterizes the average distance spectrum of the capacity-achieving ensemble of fully
random block codes. This exponential bound provides a quantitative measure of the degradation
in performance of binary linear block codes (or code ensembles) as a function of the
deviation of their (average) distance spectra from the binomial distribution, and its
use is exemplified for an ensemble of turbo-block codes.

This paper is structured as follows:
Section~\ref{section: Minimum of the Renyi divergence subject to a fixed TV distance}
solves the minimization problem for the R\'{e}nyi divergence under a constraint on the total variation
distance, Section~\ref{section: Range of relative entropies subject to a minimal TV distance} uses
the solution of this minimization problem to obtain an exact characterization of the joint range of
the relative entropies in the considered setting above.
Section~\ref{section: An Upper Bound on the ML Decoding Error Probability with the Renyi Divergence}
provides a new exponential upper bound on the block error probability of ML decoded binary linear block codes,
which is expressed in terms of the R\'{e}nyi divergence, suggests an efficient way to apply the bound to
the performance evaluation of binary linear block codes (or code ensembles), and exemplifies its use.
Throughout this paper, logarithms are to the base~$e$.

We end this section by introducing the definitions and notation used in this work,
which are consistent with \cite{ISSV15-full paper}, \cite{Verdu_book}, and are included
here for the convenience of the reader.

\subsection*{Definitions and Notation}

We assume throughout that the probability measures $P$ and $Q$ are
defined on a common measurable space $(\set{A}, \mathscr{F})$, and $P \ll Q$
denotes that $P$ is {\em absolutely continuous} with respect to $Q$, namely
there is no event $\set{F} \in \mathscr{F}$ such that $P(\set{F}) > 0 = Q(\set{F})$.
Let $\frac{\text{d}P}{\text{d}Q}$ denote the Radon-Nikodym derivative (or density)
of $P$ with respect to $Q$.

\vspace*{0.1cm}
\begin{definition}[Relative entropy]
The relative entropy is given by
\begin{align} \label{eq: relative entropy}
D(P\|Q) &= \int_{\set{A}} \mathrm{d}P \, \log \left(\frac{\mathrm{d}P}{\mathrm{d}Q}\right).
\end{align}
\end{definition}

\vspace*{0.1cm}
\begin{definition}[Total variation distance]
The total variation distance is given by
\begin{align}
\label{eq1: TV distance}
|P-Q| &=  \int_{\set{A}} \left|\frac{\text{d}P}{\text{d}Q} - 1 \right| \, \text{d}Q.
\end{align}
\end{definition}

\vspace*{0.1cm}
\begin{definition}[Hellinger divergence]
The Hellinger divergence of order $\alpha \in (0,1) \cup (1, \infty)$
is given by
\begin{align}
\mathscr{H}_{\alpha}(P \| Q) &= \frac{1}{\alpha-1} \left( \int_{\set{A}} \mathrm{d}Q \left(\frac{\mathrm{d}P}{\mathrm{d}Q}\right)^\alpha - 1 \right).
\end{align}
The analytic extension of $\mathscr{H}_{\alpha}(P \| Q)$ at $\alpha=1$ yields
$\mathscr{H}_1(P \| Q) = D(P \| Q)$ (nats).
\end{definition}

\vspace*{0.1cm}
\begin{definition}[R\'{e}nyi divergence]
The {\em R\'{e}nyi divergence of order $\alpha \geq 0$} is given as follows:
\begin{itemize}
\item
If $\alpha \in (0,1) \cup (1, \infty) $, then
\begin{align}
\label{eq:RD1}
D_{\alpha}(P\|Q) &= \frac1{\alpha-1} \; \log \left( \int_{\set{A}} \mathrm{d}Q \left(\frac{\mathrm{d}P}{\mathrm{d}Q}\right)^\alpha \right).
\end{align}
\item If $\alpha = 0$,  then
\begin{align} \label{eq: d0}
D_0 (P \| Q ) = \max_{\set{F} \in \mathscr{F}\colon P(\set{F}) = 1} \log \left(\frac1{Q (\set{F})}\right).
\end{align}
\item $D_1(P\|Q) = D(P\|Q)$ which is the analytic extension of $D_{\alpha}(P \| Q)$ at $\alpha=1$.
\item If $\alpha = +\infty$ then
\begin{align} \label{def:dinf}
D_{\infty}(P\|Q) = \log \left(\esssup \frac{\text{d}P}{\text{d}Q} \, (Y)\right)
\end{align}
with $Y \sim Q$.
\end{itemize}
\end{definition}

\begin{definition}[Chernoff information]
The Chernoff information between probability measures $P_1$ and $P_2$ is expressed as follows in
terms of the R\'{e}nyi divergence:
\begin{align}
C(P_1, P_2) = \max_{\alpha \in [0,1]} \Bigl\{(1-\alpha) D_{\alpha}(P_1\|P_2)\Bigr\}
\end{align}
and it is the best achievable exponent in the Bayesian probability of error for binary hypothesis testing
(see, e.g., \cite[Theorem~11.9.1]{Cover_Thomas}).
\end{definition}

\section{Minimization of the R\'{e}nyi Divergence with a Constrained Total Variation Distance}
\label{section: Minimum of the Renyi divergence subject to a fixed TV distance}

In this section, we derive a tight lower bound on the R\'{e}nyi divergence $D_{\alpha}(P_1 \| P_2)$
subject to an equality constraint on the total variation distance $|P_1 - P_2| = \varepsilon$ where
$\varepsilon \in [0,2)$ is fixed; alternatively, it can regarded as a minimization problem under the
inequality constraint $|P_1 - P_2| \geq \varepsilon$.
It is first shown that this lower bound is attained for probability measures defined on a binary alphabet,
and Lagrange duality is used to further simplify the computational task of this bound.
The special case where $\alpha=1$, which is specialized to the minimization of the relative entropy
subject to a fixed total variation distance, has been studied extensively, and three equivalent
forms of the solution to this optimization problem were derived in \cite{FedotovHT_IT03},
\cite{Gilardoni06}, \cite{ReidW11}.

In \cite[Corollaries~6 and 9]{Gilardoni10}, Gilardoni derived two Pinsker-type lower bounds
on the R\'{e}nyi divergence of order $\alpha \in (0,1)$, expressed in terms of the total
variation distance. Among these two bounds, the improved lower bound is given (in nats) by
\begin{equation}
D_{\alpha}(P \| Q) \geq \tfrac12 \, \alpha \varepsilon^2 + \tfrac19 \, \alpha (1+5\alpha-5\alpha^2)
\varepsilon^4, \quad \forall \, \alpha \in (0,1)
\label{eq: Gilardoni's lower bound}
\end{equation}
where $|P-Q|=\varepsilon$ denotes the total variation distance between $P$ and $Q$. Note that in the limit
where $\varepsilon$ tend to~2 (from below), this lower bound converges to a finite value which is at most
$\frac{22}{9}$; it is, however, an artifact of the lower bound in view of the next lemma.
\begin{lemma}
\begin{equation}
\lim_{\varepsilon \uparrow 2} \inf_{P, Q \colon |P-Q| =
\varepsilon} D_{\alpha}(P\|Q) = \infty, \quad \forall \, \alpha > 0.
\label{eq: limit when the TV distance tends to 1}
\end{equation}
\label{lemma: limit when the TV distance tends to 1}
\end{lemma}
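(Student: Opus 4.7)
The plan is to invoke the reduction to binary alphabets noted in the introduction (Vajda's result combined with the identity~\eqref{renyimeetshellinger}, which together imply that the infimum is attained when $P$ and $Q$ are defined on a two-point alphabet) and then exhibit an explicit divergent lower bound for the binary case.

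After the reduction, it suffices to show that
\[
\inf_{0 \leq b \leq 1 - \varepsilon/2} D_{\alpha}\bigl(\mathrm{Bern}(b + \varepsilon/2) \,\big\|\, \mathrm{Bern}(b)\bigr) \longrightarrow \infty
\]
as $\varepsilon \uparrow 2$. Set $\eta := 1 - \varepsilon/2 \to 0^{+}$, write $a = b + 1 - \eta \in [1-\eta, 1]$, and introduce
\[
M_{\alpha}(b) := a^{\alpha} b^{1-\alpha} + (1-a)^{\alpha}(1-b)^{1-\alpha},
\]
so that $D_{\alpha}(\mathrm{Bern}(a) \| \mathrm{Bern}(b)) = \frac{1}{\alpha - 1} \log M_{\alpha}(b)$ for $\alpha \neq 1$.

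For $\alpha \in (0,1)$, the crude bounds $a \leq 1$, $1 - b \leq 1$, $b \leq \eta$, and $\eta - b \leq \eta$ give $M_{\alpha}(b) \leq \eta^{1-\alpha} + \eta^{\alpha}$, which vanishes as $\eta \to 0^{+}$. Since $\alpha - 1 < 0$, this forces $D_{\alpha} \to +\infty$ uniformly in $b$. For $\alpha > 1$, drop the nonnegative second term in $M_{\alpha}$: since $b \leq \eta$ and $1 - \alpha < 0$ make $b^{1-\alpha}$ decreasing in $b$,
\[
M_{\alpha}(b) \;\geq\; a^{\alpha}\, b^{1-\alpha} \;\geq\; (1-\eta)^{\alpha}\, \eta^{-(\alpha-1)} \longrightarrow \infty,
\]
so again $D_{\alpha} \to \infty$ uniformly in $b$. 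The boundary case $\alpha = 1$ follows from the monotonicity of the R\'enyi divergence in its order: $D_{1} \geq D_{1/2}$, which we have just shown to diverge.

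The main obstacle is ensuring that these estimates are \emph{uniform} in the free parameter $b \in [0,\eta]$, so that the infimum over $b$ inherits the blow-up; fortunately the derived bounds on $M_{\alpha}(b)$ are expressed purely in terms of $\eta$, which handles this automatically.
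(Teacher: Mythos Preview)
Your argument is correct, but it follows a different route from the paper's proof. The paper does \emph{not} first reduce to binary alphabets; instead it works directly with general $P,Q$ by invoking the Bhattacharyya bound $D_{1/2}(P\|Q) \geq -\log\bigl(1-\tfrac14\varepsilon^2\bigr)$, which immediately gives the claim for $\alpha=\tfrac12$, then upgrades to all $\alpha\geq\tfrac12$ via monotonicity of $D_\alpha$ in $\alpha$, and finally covers $\alpha\in(0,\tfrac12)$ by the skew-symmetry identity $D_\alpha(P\|Q)=\tfrac{\alpha}{1-\alpha}\,D_{1-\alpha}(Q\|P)$ together with the symmetry of the total variation distance. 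Your approach instead front-loads the binary reduction (via Vajda and~\eqref{renyimeetshellinger}) and then proceeds by elementary estimates on $M_\alpha(b)$, splitting into $\alpha\in(0,1)$, $\alpha>1$, and $\alpha=1$. The paper's proof is shorter and avoids any reduction step, relying only on one known inequality plus two structural properties of the R\'enyi divergence; your proof is more self-contained computationally and yields explicit uniform bounds (indeed, your estimate $M_\alpha(b)\leq \eta^{1-\alpha}+\eta^{\alpha}$ for $\alpha\in(0,1)$ is essentially the content of the paper's later inequality~\eqref{eq: lower bound on g} in Proposition~\ref{proposition: skew-symmetry property of g}, which the paper itself notes in Remark~1 furnishes an alternative proof of the lemma). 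One minor point: since the paper establishes the binary reduction as its own Lemma~\ref{lemma: restriction of the minimization to 2-element probability distributions} \emph{after} the present lemma, your reliance on it is logically clean only because you cite Vajda externally rather than the paper's internal Lemma~2.
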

\begin{proof}
See Appendix~\ref{Appendix: proof of Lemma 1}.
\end{proof}

In the following, we derive a tight lower bound which is shown to be
achievable by a restriction of the probability measures to a binary alphabet.
For $\alpha > 0$, let
\begin{align}
\label{eq: g function - infinimum of Renyi divergence subject to TV distance}
g_{\alpha}(\varepsilon)
& \triangleq \min_{P_1, P_2 \colon |P_1 - P_2| = \varepsilon}
D_{\alpha}(P_1 \| P_2), \\
&= \min_{P_1, P_2 \colon |P_1 - P_2| \geq \varepsilon}
D_{\alpha}(P_1 \| P_2), \quad \forall \, \varepsilon \in [0,2).
\label{eq: g function - 2nd infinimum of Renyi divergence subject to TV distance}
\end{align}

In the following, we evaluate the function $g_{\alpha}$.
In view of \cite[Section~2]{FedotovHT_IT03} which characterizes the minimum
of the relative entropy in terms of the total variation distance, we first
extend the argument in \cite{FedotovHT_IT03} to prove the next lemma.

\begin{lemma}
For an arbitrary $\alpha > 0$, the minimization in \eqref{eq: g function - infinimum of Renyi divergence subject to TV distance}
is attained by probability measures which are defined on a binary alphabet.
\label{lemma: restriction of the minimization to 2-element probability distributions}
\end{lemma}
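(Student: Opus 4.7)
I plan to extend the reduction argument of \cite{FedotovHT_IT03} from $\alpha = 1$ to every $\alpha > 0$ by invoking the data processing inequality for the R\'enyi divergence, which follows from \eqref{renyimeetshellinger} together with the fact that $\mathscr{H}_\alpha$ is an $f$-divergence. The underlying idea is that a single two-cell deterministic coarsening of $\set{A}$ can be arranged so that (i) it preserves the total variation distance and (ii) it cannot increase the R\'enyi divergence; the infimum over all feasible pairs is then no smaller than the infimum over binary pairs, while the reverse inequality is trivial.

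Concretely, given any feasible $(P_1, P_2)$ with $|P_1 - P_2| = \varepsilon$, I would form the Hahn--Jordan partition $\set{A}^+ = \bigl\{a \in \set{A} \colon \tfrac{\mathrm{d}P_1}{\mathrm{d}P_2}(a) \geq 1 \bigr\}$ and $\set{A}^- = \set{A} \setminus \set{A}^+$ of the signed measure $P_1 - P_2$, and let $Q_i$ denote the pushforward of $P_i$ under $\mathbf{1}_{\set{A}^+}$, i.e., the binary probability measure with $Q_i(\pm) = P_i(\set{A}^\pm)$. Scheff\'e's identity then gives
\begin{align*}
|P_1 - P_2| \, = \, 2 \bigl( P_1(\set{A}^+) - P_2(\set{A}^+) \bigr) \, = \, |Q_1 - Q_2|,
\end{align*}
so $(Q_1, Q_2)$ is feasible with the same value of $\varepsilon$. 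The data processing inequality then yields $D_\alpha(Q_1 \| Q_2) \leq D_\alpha(P_1 \| P_2)$, and taking the infimum over $(P_1, P_2)$ shows that the infimum restricted to binary alphabets equals $g_\alpha(\varepsilon)$. Attainment is secured because the binary problem is the minimization of a continuous function of $(p,q)$ on the compact feasible set $\{(p,q) \in [0,1]^2 \colon |p-q| = \varepsilon\}$.

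The main delicate step is justifying the data processing inequality for $D_\alpha$ on a general measurable space and for every $\alpha > 0$. I would derive it from \eqref{renyimeetshellinger} by combining the standard monotonicity of $\mathscr{H}_\alpha$ under deterministic coarsenings with the fact that the outer map $x \mapsto \tfrac{1}{\alpha - 1} \log\bigl( 1 + (\alpha - 1) x \bigr)$ is monotone increasing on the range of $\mathscr{H}_\alpha$ for both $\alpha \in (0,1)$ and $\alpha > 1$. A minor check is that the induced binary pair is mutually absolutely continuous whenever $0 < P_2(\set{A}^+) < 1$; the boundary cases either force $\varepsilon = 2$ (excluded by hypothesis) or collapse the problem to a trivial one-point support.
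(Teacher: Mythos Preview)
Your proposal is correct and follows essentially the same route as the paper: form the two-cell partition $\{\tfrac{\mathrm{d}P_1}{\mathrm{d}P_2} \geq 1\}$ and its complement, verify that the induced binary pair has the same total variation distance, and apply the data processing inequality for $D_\alpha$. The only cosmetic difference is that the paper invokes the data processing theorem for the R\'enyi divergence directly from \cite[Theorem~9]{ErvenH14}, whereas you route it through \eqref{renyimeetshellinger} and the $f$-divergence property of $\mathscr{H}_\alpha$; both are valid.
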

\begin{proof}
See Appendix~\ref{Appendix: proof of Lemma 2}.
\end{proof}

The following proposition enables to calculate $g_{\alpha}$ for an
arbitrary positive $\alpha$.
\begin{proposition}
Let $\alpha \in (0,1) \cup (1, \infty)$ and $\varepsilon \in [0,2)$.
The function $g_{\alpha}$ in
\eqref{eq: g function - infinimum of Renyi divergence subject to TV distance}
satisfies
\begin{equation}
g_{\alpha}(\varepsilon) = \min_{p, q \in [0,1] \colon |p-q| \geq \frac{\varepsilon}{2}} d_\alpha (p \| q )
\label{eq: optimization problem 1 for the Renyi divergence}
\end{equation}
where
\begin{equation}
d_\alpha (p \| q ) \triangleq \frac{\log \Bigl(p^{\alpha} q^{1-\alpha} + (1-p)^{\alpha}
(1-q)^{1-\alpha} \Bigr)}{\alpha-1}
\label{eq: binary Renyi divergence}
\end{equation}
denotes the binary R\'{e}nyi divergence.
\label{proposition: optimization problem 1 for the Renyi divergence}
\end{proposition}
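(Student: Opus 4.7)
The plan is to combine Lemma~\ref{lemma: restriction of the minimization to 2-element probability distributions} with an explicit parameterization of pairs of Bernoulli measures, reducing the functional optimization in \eqref{eq: g function - infinimum of Renyi divergence subject to TV distance} to the two-dimensional scalar problem \eqref{eq: optimization problem 1 for the Renyi divergence}.

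First, I would invoke Lemma~\ref{lemma: restriction of the minimization to 2-element probability distributions} to restrict the minimization to probability measures supported on a binary alphabet, say $\{0,1\}$, and parameterize them as $P_1 = (p,1-p)$, $P_2 = (q,1-q)$ with $p, q \in [0,1]$. Next, I would evaluate both divergences on this alphabet. Applying \eqref{eq1: TV distance} gives
\begin{equation}
|P_1 - P_2| = |p - q| + |(1-p) - (1-q)| = 2\,|p-q|, \nonumber
\end{equation}
so the constraint $|P_1 - P_2| \geq \varepsilon$ in \eqref{eq: g function - 2nd infinimum of Renyi divergence subject to TV distance} becomes $|p-q| \geq \varepsilon/2$. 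Applying \eqref{eq:RD1} yields
\begin{equation}
D_{\alpha}(P_1 \| P_2) = \frac{1}{\alpha-1} \, \log\bigl( p^{\alpha} q^{1-\alpha} + (1-p)^{\alpha} (1-q)^{1-\alpha}\bigr) = d_{\alpha}(p \| q), \nonumber
\end{equation}
which coincides with \eqref{eq: binary Renyi divergence}. Substituting these two identities into \eqref{eq: g function - 2nd infinimum of Renyi divergence subject to TV distance} directly produces \eqref{eq: optimization problem 1 for the Renyi divergence}.

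The main subtlety, rather than a genuine obstacle, is to confirm that the infimum is actually attained and to dispose of the boundary cases $p, q \in \{0,1\}$: for $\alpha > 1$ one requires $P_1 \ll P_2$ for finiteness, while for $\alpha \in (0,1)$ the standard convention $0^0 = 0$ suffices. Since $\varepsilon \in [0,2)$ implies $\varepsilon/2 < 1$, there exist interior pairs $(p,q)$ with $|p-q| = \varepsilon/2$ for which $d_{\alpha}(p\|q) < \infty$, so the infimum is finite; Lemma~\ref{lemma: limit when the TV distance tends to 1} then guarantees that $d_{\alpha}$ diverges only as $|p-q| \to 1$, and the continuity of $d_{\alpha}$ on a compact sublevel set of the feasible region yields attainment of the minimum in \eqref{eq: optimization problem 1 for the Renyi divergence}.
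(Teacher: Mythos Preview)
Your proposal is correct and takes essentially the same approach as the paper: the paper's proof is the single line ``This directly follows from Lemma~\ref{lemma: restriction of the minimization to 2-element probability distributions},'' and your argument simply spells out the parameterization and the computation of $|P_1-P_2|$ and $D_{\alpha}(P_1\|P_2)$ on the binary alphabet that make this immediate. Your additional paragraph on attainment is a helpful refinement but not something the paper addresses explicitly.
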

\begin{proof}
This directly follows from Lemma~\ref{lemma: restriction of the minimization to 2-element probability distributions}.
\end{proof}

\begin{proposition}
\begin{align}
g_{\frac{1}{2}}(\varepsilon) = -\log\bigl(1-\tfrac14 \, \varepsilon^2\bigr),
\quad \forall \, \varepsilon \in [0,2)
\label{eq: g for alpha=1/2}
\end{align}
and
\begin{align}
g_2(\varepsilon) = \left\{
\begin{array}{ll}
\log(1+\varepsilon^2), & \quad \mbox{if $\varepsilon \in [0,1]$,} \\[0.1cm]
-\log \left(1-\tfrac12 \, \varepsilon \right), & \quad
\mbox{if $\varepsilon \in (1, 2).$}
\end{array}
\right.
\label{eq: g for alpha=2}
\end{align}
Furthermore, for $\alpha \in (0,1)$ and $\varepsilon \in [0,2)$,
\begin{equation}
g_{\alpha}(\varepsilon) =
\left(\frac{\alpha}{1-\alpha}\right)
g_{1-\alpha}(\varepsilon),
\label{eq: skew-symmetry property of g}
\end{equation}
and
\begin{equation}
g_{\alpha}(\varepsilon) \geq c_1(\alpha) \,
\log\left(\frac{1}{1- \tfrac12 \, \varepsilon}\right)+c_2(\alpha),
\label{eq: lower bound on g}
\end{equation}
where
\begin{align}
c_1(\alpha) \triangleq \min\left\{1, \frac{\alpha}{1-\alpha}
\right\}, \quad c_2(\alpha) \triangleq -\frac{\log \, 2}{1-\alpha}.
\label{eq: expressions for c_1, c_2}
\end{align}
\label{proposition: skew-symmetry property of g}
\end{proposition}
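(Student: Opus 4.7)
The plan rests on Proposition~\ref{proposition: optimization problem 1 for the Renyi divergence}, which reduces $g_\alpha(\varepsilon)$ to the two-variable problem of minimizing the binary R\'enyi divergence $d_\alpha(p\|q)$ subject to $|p-q|\geq \varepsilon/2$. Since $d_\alpha(p\|q)=d_\alpha(1-p\|1-q)$, I may assume $p\geq q$ and write $p=q+t$ with $t\geq \varepsilon/2$, and then optimize in two stages: first over $q\in[0,1-t]$ for fixed $t$, then over $t$.

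For \eqref{eq: g for alpha=1/2}, minimizing $d_{1/2}$ amounts to maximizing the Bhattacharyya coefficient $B(p,q)\triangleq \sqrt{pq}+\sqrt{(1-p)(1-q)}$. The key algebraic identity
\[
1-t^2-B(p,q)^2 = \bigl(\sqrt{p(1-p)}-\sqrt{q(1-q)}\bigr)^2 \geq 0
\]
gives $B\leq \sqrt{1-t^2}$ with equality at the symmetric pair $(p,q)=\bigl((1+t)/2,(1-t)/2\bigr)$, so the inner minimum is $-\log(1-t^2)$. Since this is strictly increasing in $t$, the outer minimum is attained at $t=\varepsilon/2$, proving \eqref{eq: g for alpha=1/2}. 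For \eqref{eq: g for alpha=2}, set $f(q;t)=(q+t)^2/q+(1-q-t)^2/(1-q)$ and compute $\partial_q f = t^2\bigl((1-q)^{-2}-q^{-2}\bigr)$. The only interior zero is $q=1/2$, where $\partial_q f$ flips from negative to positive, so when $t\leq 1/2$ this is the global minimizer on $(0,1-t)$, giving $\min_q f=1+4t^2$. When $t>1/2$, feasibility forces $q\leq 1-t<1/2$, a region where $\partial_q f<0$; hence $f$ is decreasing and its infimum sits at the endpoint $q=1-t$ (i.e.\ $p=1$), giving $\min_q f=1/(1-t)$. Both branches are increasing in $t$ and agree at $t=1/2$; performing the outer minimization over $t\geq \varepsilon/2$ and splitting at $\varepsilon=1$ yields the two cases of \eqref{eq: g for alpha=2}.

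The skew-symmetry \eqref{eq: skew-symmetry property of g} follows from the pointwise identity $D_\alpha(P\|Q)=\tfrac{\alpha}{1-\alpha}\,D_{1-\alpha}(Q\|P)$, immediate on writing both sides via $\int(\mathrm{d}P)^\alpha(\mathrm{d}Q)^{1-\alpha}$, combined with the symmetry of the constraint $|P_1-P_2|$ under swapping its arguments. For \eqref{eq: lower bound on g}, I invoke monotonicity of $D_\alpha$ in $\alpha$: for $\alpha\in[1/2,1)$,
\[
g_\alpha(\varepsilon)\geq g_{1/2}(\varepsilon) = \log\tfrac{1}{1-\varepsilon/2} - \log(1+\varepsilon/2) \geq \log\tfrac{1}{1-\varepsilon/2} - \log 2 \geq \log\tfrac{1}{1-\varepsilon/2} - \tfrac{\log 2}{1-\alpha},
\]
matching \eqref{eq: expressions for c_1, c_2} with $c_1(\alpha)=1$. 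For $\alpha\in(0,1/2)$, I apply \eqref{eq: skew-symmetry property of g} together with the just-proved bound for $g_{1-\alpha}$ (since $1-\alpha\in(1/2,1)$) and multiply through by $\alpha/(1-\alpha)>0$, using $\alpha\leq 1$ to align the constant to $c_2(\alpha)=-\log 2/(1-\alpha)$ and produce $c_1(\alpha)=\alpha/(1-\alpha)$. The main technical obstacle is part (ii), where one must track the migration of the optimal $q$ from the symmetric interior point $1/2$ to the boundary endpoint $1-t$ as $t$ crosses $1/2$, and then verify that the two increasing branches of $\min_q d_2$ assemble correctly into the piecewise formula once the outer minimization over $t\geq \varepsilon/2$ is carried out.
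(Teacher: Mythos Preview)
Your argument is correct in all four parts, but your route differs from the paper's in several places, and the comparison is worth recording.

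For \eqref{eq: g for alpha=1/2} and \eqref{eq: g for alpha=2} the paper does not compute anything: it invokes $D_{1/2}(P\|Q)=-2\log Z(P,Q)$ together with the known extremum $\max_{|P-Q|=\varepsilon} Z(P,Q)=\sqrt{1-\varepsilon^2/4}$ from \cite[Proposition~1]{Sason_IT15}, and it invokes $D_2(P_1\|P_2)=\log\bigl(1+\chi^2(P_1\|P_2)\bigr)$ together with the closed-form minimum of the $\chi^2$-divergence under a total-variation constraint from \cite[Eq.~(58)]{ReidW11}. Your proofs of these two parts are self-contained: the Bhattacharyya identity $1-t^2-B^2=(\sqrt{p(1-p)}-\sqrt{q(1-q)})^2$ recovers the first reference in one line, and your one-variable analysis of $f(q;t)=(q+t)^2/q+(1-q-t)^2/(1-q)$, tracking the migration of the minimizer from $q=1/2$ to the boundary $q=1-t$ as $t$ crosses $1/2$, reproduces the second. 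For \eqref{eq: skew-symmetry property of g} both proofs coincide.

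The genuinely different step is \eqref{eq: lower bound on g}. The paper argues directly from the two-point formula
\[
g_{\alpha}(\varepsilon)=\frac{1}{\alpha-1}\log\Bigl(\max_{|p-q|\geq \varepsilon/2}\bigl[p^{\alpha}q^{1-\alpha}+(1-p)^{\alpha}(1-q)^{1-\alpha}\bigr]\Bigr),
\]
splits the maximum of the sum by the sum of the maxima, and evaluates $\max_{|p-q|\geq\varepsilon'} p^{\alpha}q^{1-\alpha}$ at the corner $(p,q)=(1,1-\varepsilon')$ (or its reflection), obtaining the bound $2\max\{(1-\varepsilon/2)^{\alpha},(1-\varepsilon/2)^{1-\alpha}\}=2(1-\varepsilon/2)^{\min\{\alpha,1-\alpha\}}$; dividing by $\alpha-1<0$ yields \eqref{eq: lower bound on g} in one stroke for all $\alpha\in(0,1)$. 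Your proof is more structural: monotonicity of $D_\alpha$ in $\alpha$ reduces the case $\alpha\in[1/2,1)$ to the already-computed $g_{1/2}$, and then the skew-symmetry \eqref{eq: skew-symmetry property of g} transports that to $\alpha\in(0,1/2)$. Both are short; the paper's approach is uniform in $\alpha$ and does not need \eqref{eq: g for alpha=1/2} as input, while yours recycles the earlier parts of the proposition and avoids any optimization beyond what was already done for $\alpha=1/2$.
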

\begin{proof}
See Appendix~\ref{Appendix: proof of proposition on skew-symmetry of g}.
\end{proof}

\begin{remark}
The lower bound on $g_{\alpha}(\cdot)$ in \eqref{eq: lower bound on g} provides another
proof of Lemma~\ref{lemma: limit when the TV distance tends to 1} since it first yields that
$\lim_{\varepsilon \uparrow 2} g_{\alpha}(\varepsilon) = \infty$
for $\alpha \in (0,1)$; this lemma also holds for $\alpha \geq 1$
since $D_{\alpha}(P\|Q)$ is monotonically increasing in its order $\alpha$.
\end{remark}

\par
In the following, we use Lagrange duality to obtain an alternative
expression as a solution of the minimization problem for $g_{\alpha}$.
Recall that Proposition~\ref{proposition: optimization problem 1 for the Renyi divergence}
applies to every $\alpha > 0$.
The following enables to simplify considerably the computational task in calculating $g_{\alpha}$,
for $\alpha \in (0,1)$.

\begin{lemma}
Let $\alpha \in (0,1)$ and $\varepsilon' \in (0,1)$. The function
\begin{equation}
f_{\alpha, \varepsilon'}(q) \triangleq
\frac{\left(1-\frac{\varepsilon'}{1-q}\right)^{\alpha-1} -
\left(1+\frac{\varepsilon'}{q}\right)^{\alpha-1}}{\left(1+
\frac{\varepsilon'}{q}\right)^\alpha-
\left(1-\frac{\varepsilon'}{1-q}\right)^\alpha} \, , \, \quad
\forall q \in (0, 1-\varepsilon')
\label{eq: f of the 2 parameters alpha and epsilon}
\end{equation}
is strictly monotonically increasing, positive, continuous, and
\begin{align}
\lim_{q \rightarrow 0^+} f_{\alpha, \varepsilon'}(q) = 0, \qquad
\lim_{q \rightarrow (1-\varepsilon')^-} f_{\alpha, \varepsilon'}(q) = +\infty.
\label{eq: limits of f at the 2 endpoints of the interval}
\end{align}
\label{lemma: monotonicity of f}
\end{lemma}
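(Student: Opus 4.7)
The plan is to introduce $a(q)=1+\varepsilon'/q$ and $b(q)=1-\varepsilon'/(1-q)$, so that on $q\in(0,1-\varepsilon')$ both are smooth and strictly decreasing with $a'(q)=-\varepsilon'/q^2$, $b'(q)=-\varepsilon'/(1-q)^2$, and $a>1>b>0$. In this notation,
\[
f_{\alpha,\varepsilon'}(q)=\frac{b^{\alpha-1}-a^{\alpha-1}}{a^\alpha-b^\alpha}.
\]
Continuity is then immediate. Positivity of the numerator follows from $x\mapsto x^{\alpha-1}$ being strictly decreasing on $(0,\infty)$ (since $\alpha-1<0$) combined with $a>b$; positivity of the denominator follows from $x\mapsto x^\alpha$ being strictly increasing there.

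For the boundary limits I would substitute directly. At $q\to 0^+$ one has $a\to\infty$ while $b\to 1-\varepsilon'\in(0,1)$, so the denominator $a^\alpha-b^\alpha$ diverges while the numerator stays bounded, yielding $f_{\alpha,\varepsilon'}(q)\to 0$. At $q\to(1-\varepsilon')^-$ one has $a\to 1/(1-\varepsilon')$ finite while $b\to 0^+$; the denominator tends to the positive finite value $(1-\varepsilon')^{-\alpha}$, while $b^{\alpha-1}\to+\infty$, so $f_{\alpha,\varepsilon'}(q)\to+\infty$.

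The substantive point is strict monotonicity, and my key move is to use elementary antidifferentiation to write, with $I_k(q):=\int_{b(q)}^{a(q)}t^{k}\,dt$,
\[
f_{\alpha,\varepsilon'}(q)=\frac{1-\alpha}{\alpha}\cdot\frac{I_{\alpha-2}(q)}{I_{\alpha-1}(q)}.
\]
By Leibniz's rule, $I_k'(q)=a^k a'-b^k b'$, and a quotient-rule computation shows that the sign of $f_{\alpha,\varepsilon'}'(q)$ matches that of $a'X_a-b'X_b$, where
\[
X_a:=a^{\alpha-2}I_{\alpha-1}-a^{\alpha-1}I_{\alpha-2}=a^{\alpha-2}\int_b^a t^{\alpha-2}(t-a)\,dt,
\]
\[
X_b:=b^{\alpha-2}I_{\alpha-1}-b^{\alpha-1}I_{\alpha-2}=b^{\alpha-2}\int_b^a t^{\alpha-2}(t-b)\,dt.
\]

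The payoff is that $X_a<0$ (since $t\le a$ on $[b,a]$ with strict inequality off the right endpoint, and $t^{\alpha-2}>0$) and $X_b>0$ symmetrically, while $a'<0$ and $b'<0$; therefore $a'X_a>0$ and $-b'X_b>0$, and the derivative is strictly positive throughout $(0,1-\varepsilon')$. I expect the only genuinely delicate step to be the bookkeeping that groups the quotient-rule terms by endpoint so that the two contributions separate cleanly. The conceptual reason the two contributions reinforce rather than cancel is that as $q$ increases, both endpoints of the interval $[b(q),a(q)]$ drift to the left, each independently shifting the probability measure $\mu_q$ with density proportional to $t^{\alpha-1}$ in a direction that increases $\expectation_{\mu_q}[1/T]$, which is precisely $\frac{\alpha}{1-\alpha}\,f_{\alpha,\varepsilon'}(q)$.
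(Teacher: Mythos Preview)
Your argument is correct and is genuinely different from the paper's. The paper does not use the integral representation $f_{\alpha,\varepsilon'}=\tfrac{1-\alpha}{\alpha}\,I_{\alpha-2}/I_{\alpha-1}$; instead it splits the interval at $q=\tfrac{1-\varepsilon'}{2}$ and, on each half, factors $f_{\alpha,\varepsilon'}$ as a product of two positive strictly increasing functions. Concretely, on $[\tfrac{1-\varepsilon'}{2},1-\varepsilon')$ it writes $f_{\alpha,\varepsilon'}(q)=(1+\varepsilon'/q)^{-1}\,u_\alpha\bigl(z_{\varepsilon'}(q)\bigr)$ and on $(0,\tfrac{1-\varepsilon'}{2}]$ it writes $f_{\alpha,\varepsilon'}(q)=(1-\varepsilon'/(1-q))^{-1}\,r_\alpha\bigl(z_{\varepsilon'}(q)\bigr)$, where $z_{\varepsilon'}(q)=b(q)/a(q)$, and then proves three auxiliary lemmas establishing the monotonicity of $z_{\varepsilon'}$, $u_\alpha$, and $r_\alpha$ separately. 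The split is forced because $z_{\varepsilon'}$ is unimodal (increasing then decreasing), so one must pair it with a decreasing outer function on one half and an increasing one on the other.

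Your route sidesteps all of this: the antiderivative trick turns the numerator and denominator into integrals over $[b,a]$, and the Leibniz/quotient-rule bookkeeping separates the derivative into two endpoint contributions $a'X_a$ and $-b'X_b$, each of which is manifestly positive because $t-a\le 0$ and $t-b\ge 0$ on $[b,a]$. This handles the whole interval at once with no case split and no auxiliary lemmas, and the probabilistic remark about $\expectation_{\mu_q}[1/T]$ gives a clean conceptual explanation for why both endpoint motions push in the same direction. The paper's approach, by contrast, makes the multiplicative structure of $f_{\alpha,\varepsilon'}$ explicit, which may be of independent interest, but at the cost of a longer and more fragmented proof.
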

\begin{proof}
See Appendix~\ref{Appendix: proof of lemma on monotonicity}.
\end{proof}

\begin{corollary}
For $\alpha \in (0,1)$ and $\varepsilon' \in (0,1)$, the equation
\begin{align}
f_{\alpha, \varepsilon'}(q) = \tfrac{1-\alpha}{\alpha}
\label{eq: equation for f of the 2 parameters alpha and epsilon}
\end{align}
has a unique solution $q \in (0, 1-\varepsilon')$.
\end{corollary}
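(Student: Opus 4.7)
The plan is to obtain this as an immediate consequence of Lemma~\ref{lemma: monotonicity of f}, which already does all of the nontrivial work. First I would observe that since $\alpha \in (0,1)$, the target value $\tfrac{1-\alpha}{\alpha}$ is a strictly positive finite real number, hence it lies in the open interval $(0,+\infty)$.

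Next I would combine the four properties guaranteed by Lemma~\ref{lemma: monotonicity of f}: on the open interval $(0, 1-\varepsilon')$ the function $f_{\alpha, \varepsilon'}$ is continuous and strictly monotonically increasing, with boundary behavior $\lim_{q \to 0^+} f_{\alpha, \varepsilon'}(q) = 0$ and $\lim_{q \to (1-\varepsilon')^-} f_{\alpha, \varepsilon'}(q) = +\infty$. Continuity together with the boundary limits, via the intermediate value theorem, implies that the range of $f_{\alpha, \varepsilon'}$ on $(0, 1-\varepsilon')$ is exactly $(0, +\infty)$; in particular, the equation $f_{\alpha, \varepsilon'}(q) = \tfrac{1-\alpha}{\alpha}$ admits at least one solution in $(0, 1-\varepsilon')$. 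Strict monotonicity then rules out more than one preimage, since two distinct points of the domain are necessarily separated by $f_{\alpha, \varepsilon'}$.

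There is no real obstacle in this argument: the substance of the corollary is already absorbed into Lemma~\ref{lemma: monotonicity of f}, and the corollary merely repackages it as a statement about the solvability of the specific equation \eqref{eq: equation for f of the 2 parameters alpha and epsilon} that will be used later (presumably in the KKT analysis that yields the dual form of $g_\alpha$). The only point worth double-checking is that $\tfrac{1-\alpha}{\alpha}$ genuinely lies in the open range $(0,+\infty)$, which is immediate from the standing assumption $\alpha \in (0,1)$.
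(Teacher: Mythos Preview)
Your proposal is correct and matches the paper's own proof essentially verbatim: the paper simply invokes Lemma~\ref{lemma: monotonicity of f} together with the intermediate value theorem (which it calls the ``mean value theorem for continuous functions''). Your only addition is the explicit remark that strict monotonicity gives uniqueness, which the paper leaves implicit.
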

\begin{proof}
It follows from Lemma~\ref{lemma: monotonicity of f},
and the mean value theorem for continuous functions.
\end{proof}

\begin{remark}
Since $f_{\alpha, \varepsilon'} \colon (0, 1-\varepsilon') \rightarrow (0, \infty)$
is strictly monotonically increasing (see
Lemma~\ref{lemma: monotonicity of f}), the numerical calculation of
the unique solution of equation~\eqref{eq: equation for f of the 2 parameters alpha and epsilon}
is easy.
\label{remark: numerical solution of the equation}
\end{remark}

An alternative simplified form for the optimization problem in
Proposition~\ref{proposition: optimization problem 1 for the Renyi divergence}
is next provided for orders $\alpha \in (0,1)$. Hence,
Proposition~\ref{proposition: optimization problem 1 for the Renyi divergence}
applies to every $\alpha>0$, whereas the following is restricted to $\alpha \in (0,1)$.
This, however, proves to be very useful in the next section in terms of obtaining
a significant reduction in the computational complexity of $g_\alpha(\cdot)$ where
only $\alpha \in (0,1)$ is of interest there.\footnote{This saving in the
computational complexity accelerated the running time of the numerical calculations
in our computer by two orders of magnitude.}

\begin{proposition}
Let $\alpha \in (0,1)$, $\varepsilon \in (0,2)$, and let $\varepsilon' = \tfrac{\varepsilon}{2}$.
A solution of the minimization problem for $g_{\alpha}(\varepsilon)$ in
Proposition~\ref{proposition: optimization problem 1 for the Renyi divergence}
is obtained by calculating the binary R\'{e}nyi divergence $d_\alpha (p \| q )$
in \eqref{eq: binary Renyi divergence} while
taking the unique solution $q \in (0, 1- \varepsilon')$ of
\eqref{eq: equation for f of the 2 parameters alpha and epsilon},
and setting $p = q+\varepsilon'$.
\label{proposition: efficient calculation of g_alpha}
\end{proposition}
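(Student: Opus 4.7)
The plan is to use Proposition~\ref{proposition: optimization problem 1 for the Renyi divergence} to reduce the minimization to the two-variable problem over $(p,q) \in [0,1]^2$ subject to $|p-q| \ge \varepsilon'$, and then to collapse this via Lagrange duality to a single scalar equation.

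I would first exploit the symmetry $d_\alpha(p \| q) = d_\alpha(1-p \| 1-q)$ (invariance of the R\'enyi divergence under relabeling of the binary alphabet) to reduce to the half-region $p \ge q + \varepsilon'$, since the map $(p,q) \mapsto (1-p,1-q)$ carries the other half $q \ge p + \varepsilon'$ bijectively onto it while preserving the objective. Next, I would argue that the inequality constraint is active at any optimum: for fixed $q$, the function $p \mapsto d_\alpha(p\|q)$ vanishes only at $p=q$, is nonnegative, and is a monotone transformation of the $f$-divergence $\mathscr{H}_\alpha(p\|q)$ by \eqref{renyimeetshellinger}, hence quasi-convex in $p$ and strictly increasing for $p > q$. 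Complementary slackness in the KKT conditions then forces $p - q = \varepsilon'$, so the problem reduces to minimizing
\begin{equation*}
\phi(q) \triangleq d_\alpha(q + \varepsilon' \| q), \qquad q \in (0, 1-\varepsilon').
\end{equation*}

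The heart of the argument is to show that the stationarity condition $\phi'(q) = 0$ coincides with \eqref{eq: equation for f of the 2 parameters alpha and epsilon}. Writing $\phi(q) = \frac{1}{\alpha-1}\log h(q)$ with
\begin{equation*}
h(q) = p^\alpha q^{1-\alpha} + (1-p)^\alpha (1-q)^{1-\alpha}, \qquad p = q+\varepsilon',
\end{equation*}
I would compute $h'(q)$ directly (using $\mathrm{d}p/\mathrm{d}q = 1$) and rewrite the resulting identity $h'(q)=0$ as
\begin{equation*}
(1-\alpha)\bigl[(p/q)^\alpha - ((1-p)/(1-q))^\alpha\bigr] = \alpha\bigl[((1-p)/(1-q))^{\alpha-1} - (p/q)^{\alpha-1}\bigr].
\end{equation*}
Substituting $p/q = 1 + \varepsilon'/q$ and $(1-p)/(1-q) = 1 - \varepsilon'/(1-q)$ and dividing yields precisely $f_{\alpha,\varepsilon'}(q) = (1-\alpha)/\alpha$.

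Uniqueness of the critical point in $(0, 1-\varepsilon')$ is already delivered by the corollary to Lemma~\ref{lemma: monotonicity of f}. To confirm that this critical point is the global minimum of $\phi$ (rather than a maximum or saddle), I would again invoke the joint convexity of $\mathscr{H}_\alpha$ together with \eqref{renyimeetshellinger}: $d_\alpha$ is quasi-convex jointly in $(p,q)$, so its restriction $\phi$ to the affine line $p = q + \varepsilon'$ is quasi-convex, hence unimodal on $(0, 1-\varepsilon')$, and its unique interior stationary point must be the minimizer. The main obstacle I anticipate is the bookkeeping in the derivative simplification---carefully collecting the four distinct power-products in $h'(q)$ into the compact ratio defining $f_{\alpha,\varepsilon'}$---together with rigorously justifying that the KKT point is a minimum, for which the quasi-convexity route above seems the cleanest; a secondary concern is asserting strong duality so that the KKT conditions are not merely necessary but locate the optimum.
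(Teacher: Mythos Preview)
Your proposal is correct and follows essentially the same route as the paper: symmetry reduction to $p\ge q+\varepsilon'$, Lagrange/KKT analysis yielding the active constraint $p=q+\varepsilon'$, and algebraic rearrangement of the stationarity condition into $f_{\alpha,\varepsilon'}(q)=(1-\alpha)/\alpha$. The only cosmetic difference is that the paper works directly with the concave maximization of $h(p,q)=p^\alpha q^{1-\alpha}+(1-p)^\alpha(1-q)^{1-\alpha}$ and invokes Slater's condition on the linearized constraint $p-q\ge\varepsilon'$ to assert strong duality (hence sufficiency of KKT), whereas you justify optimality of the unique stationary point via quasi-convexity of $d_\alpha$ inherited from joint convexity of $\mathscr{H}_\alpha$ through \eqref{renyimeetshellinger}; both arguments certify the same optimum.
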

\begin{proof}
See Appendix~\ref{Appendix: proof of proposition on efficient calculation of g}.
\end{proof}

In view of Proposition~\ref{proposition: efficient calculation of g_alpha},
the plots in Figures~\ref{Figure: Renyi_divergence_alpha_025_050_075_100}
and~\ref{Figure: Renyi_divergence_alpha09} provide numerical results.

\begin{figure}[here!]
\begin{center}
\hspace*{-1cm}
\epsfig{file=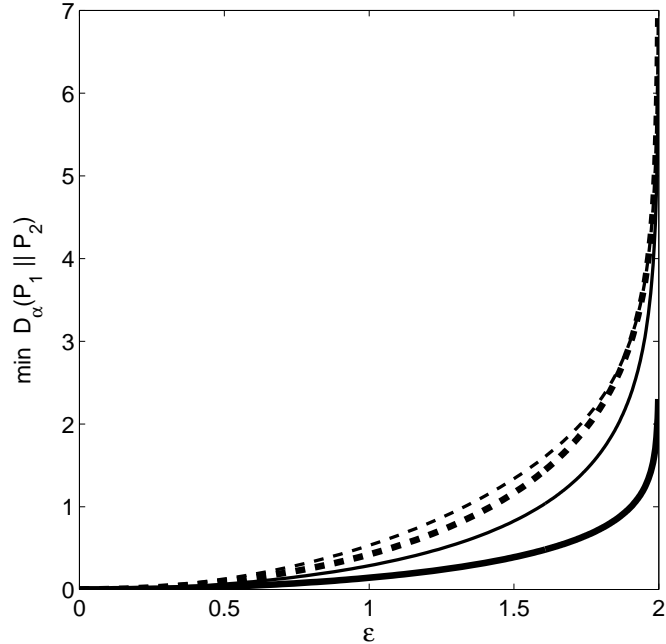,scale=0.62}
\caption{\label{Figure: Renyi_divergence_alpha_025_050_075_100} A plot of the
minimum of the R\'{e}nyi divergence $D_{\alpha}(P_1\|P_2)$ subject to
the constraint $|P_1-P_2| \geq \varepsilon$ where $\varepsilon \in [0,2)$.
The curves in this plot correspond to $\alpha = 0.25$ (thick solid curve),
$\alpha=0.50$ (thin solid curve), $\alpha=0.75$ (thick dashed curve), and
$\alpha=1.00$ (thin dashed curve, referring to the relative entropy).}
\end{center}
\end{figure}

\begin{figure}[here!]
\begin{center}
\hspace*{-1cm}
\epsfig{file=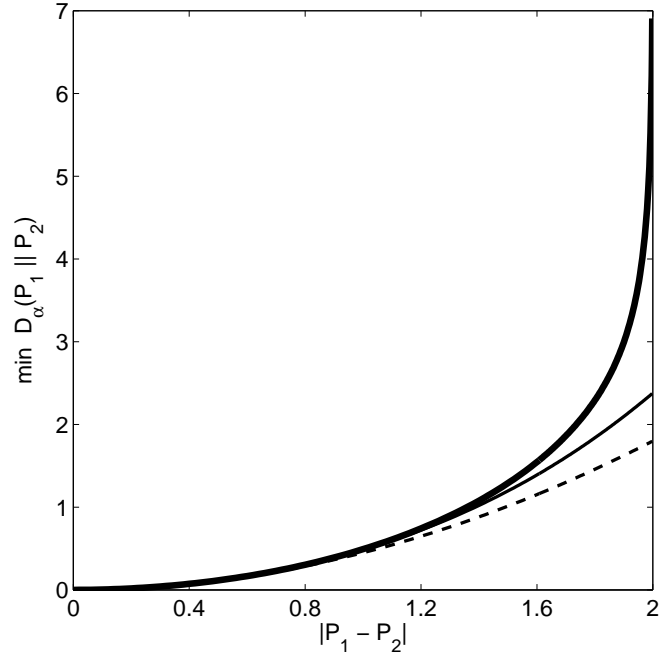,scale=0.62}
\caption{\label{Figure: Renyi_divergence_alpha09} A plot of the
minimum of the R\'{e}nyi divergence $D_{\alpha}(P_1\|P_2)$ of order
$\alpha = 0.90$ subject to the constraint $|P_1-P_2| \geq \varepsilon \in [0,2)$.
The exact minimum (thick solid curve) is compared with the
Pinsker-type lower bound in \cite[Corollary~9]{Gilardoni10}
(the thin solid curve), and its weaker version in
\cite[Corollary~6]{Gilardoni10} (the dashed curve).}
\end{center}
\end{figure}

\newpage

\section{The Locus of $(D(Q\|P_1), D(Q\|P_2))$ With a Constrained
Total Variation Distance}
\label{section: Range of relative entropies subject to a minimal TV distance}

In this section, we address the following question:
\begin{question}
What is the locus of the points $\bigl( D(Q\|P_1), D(Q\|P_2) \bigr)$
if $P_1, P_2, Q$ are arbitrary probability measures which are mutually
absolutely continuous, and $|P_1 - P_2| \geq \varepsilon$ for a given value
$\varepsilon \in (0,2)$ ? (none of the three probability measures is fixed).
\label{question}
\end{question}

The present section provides an exact characterization of this locus
in view of the solution to the minimization problem in
Section~\ref{section: Minimum of the Renyi divergence subject to a fixed TV distance},
and the following lemma:
\begin{lemma}
Let $P_1, P_2, Q$ be pairwise mutually absolutely continuous probability measures
defined on a measurable space $(\set{A}, \mathscr{F})$. Then, for
$\alpha \in (0,1) \cup (1, \infty)$,
\begin{align}
D_{\alpha}(P_1 \| P_2) = D(Q \| P_2) + \tfrac{\alpha}{1-\alpha} \cdot D(Q\|P_1)
+ \tfrac{1}{\alpha-1} \cdot D(Q \| Q_{\alpha})
\label{eq: decomposition of a Renyi divergence into relative entropies}
\end{align}
where the probability measure $Q_{\alpha}$ is given by
\begin{align}
\frac{\mathrm{d}Q_{\alpha}}{\mathrm{d}Q} \, (x) = \frac{\left(\frac{\mathrm{d}P_1}{\mathrm{d}Q} \, (x)\right)^\alpha
\left(\frac{\mathrm{d}P_2}{\mathrm{d}Q} \, (x)\right)^{1-\alpha}}{\dint_{\set{A}}{\left(\frac{\mathrm{d}P_1}{\mathrm{d}Q} \, (u)
\right)^\alpha} \, \left(\frac{\mathrm{d}P_2}{\mathrm{d}Q} \, (u)\right)^{1-\alpha} \, \mathrm{d}Q(u)}, \quad \forall \, x \in \set{A}.
\label{eq: optimized tilted measure}
\end{align}
\label{lemma: identity for the Renyi divergence}
\end{lemma}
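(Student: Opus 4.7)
The plan is to rewrite both sides by performing a change of measure to $Q$, and then to recognize the normalization constant appearing in the definition of $Q_\alpha$ as the natural quantity to relate the Rényi divergence to the relative entropies on the right-hand side.

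First I would set
\begin{align*}
Z \triangleq \int_{\set{A}} \left(\frac{\mathrm{d}P_1}{\mathrm{d}Q}(x)\right)^\alpha \left(\frac{\mathrm{d}P_2}{\mathrm{d}Q}(x)\right)^{1-\alpha} \mathrm{d}Q(x),
\end{align*}
so that by \eqref{eq: optimized tilted measure}, $\frac{\mathrm{d}Q_\alpha}{\mathrm{d}Q} = \frac{1}{Z} \bigl(\frac{\mathrm{d}P_1}{\mathrm{d}Q}\bigr)^\alpha \bigl(\frac{\mathrm{d}P_2}{\mathrm{d}Q}\bigr)^{1-\alpha}$. Using the mutual absolute continuity of $P_1, P_2, Q$, the chain rule for Radon-Nikodym derivatives gives $\frac{\mathrm{d}P_1}{\mathrm{d}P_2} = \frac{\mathrm{d}P_1}{\mathrm{d}Q} \cdot \frac{\mathrm{d}Q}{\mathrm{d}P_2}$, and changing variables from $P_2$ to $Q$ in the integral defining $D_\alpha(P_1\|P_2)$ in \eqref{eq:RD1} shows directly that
\begin{align*}
D_\alpha(P_1 \| P_2) = \frac{1}{\alpha - 1} \log Z.
\end{align*}

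Next I would compute $D(Q \| Q_\alpha)$ by taking the logarithm of $\frac{\mathrm{d}Q_\alpha}{\mathrm{d}Q}$ and integrating against $Q$. Since
\begin{align*}
\log \frac{\mathrm{d}Q_\alpha}{\mathrm{d}Q}(x) = -\log Z + \alpha \log \frac{\mathrm{d}P_1}{\mathrm{d}Q}(x) + (1-\alpha) \log \frac{\mathrm{d}P_2}{\mathrm{d}Q}(x),
\end{align*}
and since $\int_{\set{A}} \log \frac{\mathrm{d}P_i}{\mathrm{d}Q}\,\mathrm{d}Q = -D(Q\|P_i)$ for $i=1,2$ (again by the chain rule, since $\frac{\mathrm{d}Q}{\mathrm{d}P_i} = \bigl(\frac{\mathrm{d}P_i}{\mathrm{d}Q}\bigr)^{-1}$), integrating $-\log \frac{\mathrm{d}Q_\alpha}{\mathrm{d}Q}$ against $Q$ yields
\begin{align*}
D(Q \| Q_\alpha) = \log Z + \alpha \, D(Q\|P_1) + (1-\alpha)\, D(Q\|P_2).
\end{align*}
Solving this for $\log Z$ and dividing by $\alpha - 1$ produces exactly the claimed identity \eqref{eq: decomposition of a Renyi divergence into relative entropies}, after noticing that $-\frac{1-\alpha}{\alpha-1} = 1$ and $-\frac{\alpha}{\alpha-1} = \frac{\alpha}{1-\alpha}$.

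There is essentially no obstacle: the only delicate points are the manipulations of Radon-Nikodym derivatives, which are justified by the assumption that $P_1, P_2, Q$ are pairwise mutually absolutely continuous (ensuring all densities appearing in $Z$ and $\frac{\mathrm{d}Q_\alpha}{\mathrm{d}Q}$ are strictly positive $Q$-a.e., so the logarithms and powers are well-defined), and the harmless verification that $Q_\alpha$ is indeed a probability measure, which is immediate from the definition of $Z$. This identity is essentially the familiar ``tilted-measure'' decomposition used in large-deviations and hypothesis-testing arguments, specialized to the Rényi divergence.
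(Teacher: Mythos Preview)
Your proof is correct and follows essentially the same route as the paper's own argument: both are direct verifications obtained by expressing all quantities via Radon--Nikodym derivatives with respect to $Q$, recognizing that $D_\alpha(P_1\|P_2)=\tfrac{1}{\alpha-1}\log Z$ with $Z$ the normalizing constant in \eqref{eq: optimized tilted measure}, and then expanding $\log\frac{\mathrm{d}Q_\alpha}{\mathrm{d}Q}$ to relate $D(Q\|Q_\alpha)$ to $\log Z$, $D(Q\|P_1)$, and $D(Q\|P_2)$. The only cosmetic difference is that the paper starts from the full right-hand side and simplifies toward $D_\alpha(P_1\|P_2)$, whereas you isolate $Z$ first and then rearrange; the underlying computation is identical.
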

\begin{proof}
See Appendix~\ref{Appendix: proof of the lemma with the identity for the Renyi divergence}.
\end{proof}

As a corollary of Lemma~\ref{lemma: identity for the Renyi divergence}, the
following tight inequality holds, which is attributed to van Erven
\cite[Lemma~6.6]{ErvenH10_PhD} and Shayevitz \cite[Section~IV.B.8]{Shayevitz_arXiv10}).
It will be useful for the continuation of this section, jointly with the results of
Section~\ref{section: Minimum of the Renyi divergence subject to a fixed TV distance}.
\begin{corollary}
Let $P_1 \ll \gg P_2$ be mutually absolutely continuous discrete probability measures
defined on a common set $\set{A}$. If $\alpha \in (0,1)$ then
\begin{align}
\tfrac{\alpha}{1-\alpha} \cdot D(Q\|P_1) + D(Q\|P_2) \geq D_{\alpha}(P_1 \| P_2)
\label{Shayevitz inequality}
\end{align}
with equality if and only if, for every $x \in \set{A}$,
\begin{align}
Q(x) = \frac{P_1(x)^\alpha \, P_2(x)^{1-\alpha}}{\sum_{u \in \set{A}} P_1(u)^\alpha \, P_2(u)^{1-\alpha}}.
\label{Shayevitz function}
\end{align}
For $\alpha > 1$, inequality~\eqref{Shayevitz inequality} is reversed with the same necessary
and sufficient condition for an equality.
\label{corollary: Shayevitz}
\end{corollary}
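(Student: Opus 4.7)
The plan is to read the corollary as an immediate algebraic consequence of Lemma~\ref{lemma: identity for the Renyi divergence} combined with the non-negativity of the relative entropy, so no new analytic work is required.

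First I would take the identity
\begin{align*}
D_{\alpha}(P_1 \| P_2) = D(Q \| P_2) + \tfrac{\alpha}{1-\alpha} \, D(Q\|P_1) + \tfrac{1}{\alpha-1} \, D(Q \| Q_{\alpha})
\end{align*}
from \eqref{eq: decomposition of a Renyi divergence into relative entropies} and rearrange it as
\begin{align*}
\tfrac{\alpha}{1-\alpha} \, D(Q\|P_1) + D(Q\|P_2) - D_{\alpha}(P_1 \| P_2) = \tfrac{1}{1-\alpha} \, D(Q \| Q_{\alpha}).
\end{align*}
Since $D(Q \| Q_{\alpha}) \geq 0$ always (with equality iff $Q = Q_\alpha$), the sign of the right-hand side is governed entirely by the factor $\tfrac{1}{1-\alpha}$.

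For $\alpha \in (0,1)$ this factor is positive, so the difference is non-negative, which is exactly \eqref{Shayevitz inequality}. For $\alpha > 1$ the factor is negative, so the inequality reverses, as claimed. In both cases equality holds precisely when $D(Q\|Q_\alpha)=0$, i.e., when $Q = Q_\alpha$. To translate this condition into the explicit form \eqref{Shayevitz function}, I would specialize the Radon--Nikodym expression \eqref{eq: optimized tilted measure} to the discrete case: taking counting measure as a dominating reference and substituting $\frac{\mathrm{d}P_i}{\mathrm{d}Q}(x) = P_i(x)/Q(x)$ makes the $Q(x)$ factors cancel in both numerator and denominator, leaving
\begin{align*}
Q_\alpha(x) = \frac{P_1(x)^\alpha \, P_2(x)^{1-\alpha}}{\sum_{u \in \set{A}} P_1(u)^\alpha \, P_2(u)^{1-\alpha}}.
\end{align*}

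There is no real obstacle here; the only care point is verifying that mutual absolute continuity of $P_1, P_2, Q$ guarantees all quantities in Lemma~\ref{lemma: identity for the Renyi divergence} are well-defined and finite (so the rearrangement is legitimate) and that the normalization integral in \eqref{eq: optimized tilted measure} is strictly positive and finite, both of which are ensured by the hypotheses.
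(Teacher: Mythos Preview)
Your proposal is correct and is exactly the argument the paper intends: the corollary is stated immediately after Lemma~\ref{lemma: identity for the Renyi divergence} and is derived from it precisely by reading off the sign of $\tfrac{1}{1-\alpha}\,D(Q\|Q_\alpha)$ and invoking the equality condition $Q=Q_\alpha$. No additional ideas are needed, and your specialization of \eqref{eq: optimized tilted measure} to the discrete case to recover \eqref{Shayevitz function} is also correct.
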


\begin{remark}
The knowledge of the maximizing probability measure in \eqref{Shayevitz function}
is required for the characterization of the exact locus which is studied in this section.
\end{remark}

\vspace*{0.1cm}
The exact locus of the points $\bigl(D(Q\|P_1), D(Q\|P_2) \bigr)$
is determined as follows: let $|P_1-P_2| \geq \varepsilon$ for a fixed
$\varepsilon \in (0,2)$, and let $\alpha \in (0,1)$ be chosen arbitrarily. By the tight lower
bound in Section~\ref{section: Minimum of the Renyi divergence subject to a fixed TV distance},
we have
\begin{equation}
D_{\alpha}(P_1 \| P_2) \geq g_{\alpha}(\varepsilon)
\label{eq: tight lower bound on Renyi divergence}
\end{equation}
where $g_{\alpha}$ is expressed in \eqref{eq: optimization problem 1 for the Renyi divergence}.
For $\alpha \in (0,1)$ and for a fixed value of $\varepsilon \in (0,2)$,
let $p = p^\star$ and $q = q^\star$ in $(0,1)$ be set to achieve the global
minimum in \eqref{eq: optimization problem 1 for the Renyi divergence} (note that, without loss
of generality, one can assume that $p \geq q$ since if $(p,q)$ achieves the minimum in
\eqref{eq: optimization problem 1 for the Renyi divergence} then also $(1-p, 1-q)$ achieves the same
minimum). Consequently, the lower bound in \eqref{eq: tight lower bound on Renyi divergence} is
attained by probability measures $P_1, P_2$ which are defined on a binary alphabet (see
Lemma~\ref{lemma: restriction of the minimization to 2-element probability distributions}) with
\begin{align}
\label{eq: pair of 2-element PDs}
\begin{split}
& P_1(0) = p^\star = p^\star(\alpha, \varepsilon), \quad P_1(1) = 1-p^\star; \\
& P_2(0) = q^\star = q^\star(\alpha, \varepsilon), \quad P_2(1) = 1-q^\star.
\end{split}
\end{align}
From Corollary~\ref{corollary: Shayevitz} and \eqref{eq: tight lower bound on Renyi divergence},
\eqref{eq: pair of 2-element PDs},
it follows that for every $\alpha \in (0,1)$
\begin{equation}
g_{\alpha}(\varepsilon) \leq D(Q \| P_2) + \tfrac{\alpha}{1-\alpha} \cdot D(Q \| P_1)
\label{eq: an inequality characterizing the achievable region}
\end{equation}
where equality in \eqref{eq: an inequality characterizing the achievable region} holds
if $P_1$ and $P_2$ are the probability measures in
\eqref{eq: pair of 2-element PDs} which are defined on a binary alphabet,
and $Q$ is the respective probability measure in
\eqref{Shayevitz function} which is therefore also defined on a binary alphabet.
Hence, there exists a triple of probability measures
$P_1, P_2, Q$ which are defined on a binary alphabet and satisfy
\eqref{eq: an inequality characterizing the achievable region}
with equality, and these probability measures are easy to calculate
for every $\alpha \in (0,1)$ and $\varepsilon \in (0,2)$.

\begin{remark}
Similarly to \eqref{eq: an inequality characterizing the achievable region}, since
$|P_1-P_2| = |P_2-P_1|$, it follows from
\eqref{eq: an inequality characterizing the achievable region} that
\begin{equation}
g_{\alpha}(\varepsilon) \leq D(Q \| P_1) + \tfrac{\alpha}{1-\alpha} \cdot D(Q \| P_2).
\label{eq: 2nd inequality characterizing the achievable region}
\end{equation}
By multiplying both sides of \eqref{eq: 2nd inequality characterizing the achievable region}
by $\tfrac{1-\alpha}{\alpha}$ and relying on the skew-symmetry property in \eqref{eq: skew-symmetry property of g},
it follows that \eqref{eq: 2nd inequality characterizing the achievable region} is equivalent to
$$ g_{1-\alpha}(\varepsilon) \leq D(Q \| P_2) + \tfrac{1-\alpha}{\alpha} \cdot D(Q \| P_1)$$
which is \eqref{eq: an inequality characterizing the achievable region} when $\alpha \in (0,1)$
is replaced by $1-\alpha$. Hence, since \eqref{eq: an inequality characterizing the achievable region} holds
for every $\alpha \in (0,1)$, there is no additional information in
\eqref{eq: 2nd inequality characterizing the achievable region}.
\end{remark}

\begin{theorem}
The exact locus of $\bigl( D(Q \| P_1), D(Q \| P_2) \bigr)$ in the setting
of Question~1 is the convex region whose boundary is the convex envelope of all the straight lines
\begin{align}
D(Q \| P_2) + \tfrac{\alpha}{1-\alpha} \cdot D(Q\|P_1) = g_{\alpha}(\varepsilon), \quad \forall \, \alpha \in (0,1)
\label{eq: straight lines}
\end{align}
(i.e., the boundary is the pointwise maximum of the set of straight lines in
\eqref{eq: straight lines} for $\alpha \in (0,1)$).
Furthermore, all the points in this convex region, including its boundary, are attained by
probability measures $P_1, P_2, Q$ which are defined on a binary alphabet.
\label{theorem: joint range of the relative entropies}
\end{theorem}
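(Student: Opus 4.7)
My plan is to prove $S=R$, where $S$ denotes the locus and $R$ denotes the claimed convex region, by establishing both containments; the inner bound will be realized using only binary-alphabet triples, which also yields the second assertion. For the \emph{outer bound} $S\subseteq R$, I combine Corollary~\ref{corollary: Shayevitz} with Proposition~\ref{proposition: optimization problem 1 for the Renyi divergence}: for any admissible triple $(P_1,P_2,Q)$ and every $\alpha\in(0,1)$,
$$D(Q\|P_2)+\tfrac{\alpha}{1-\alpha}\,D(Q\|P_1)\;\geq\;D_\alpha(P_1\|P_2)\;\geq\;g_\alpha(\varepsilon),$$
placing $(D(Q\|P_1),D(Q\|P_2))$ in the closed half-plane defined by the $\alpha$-th line in \eqref{eq: straight lines}. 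Intersecting over $\alpha\in(0,1)$ yields $S\subseteq R$, and convexity of $R$ is automatic since it is an intersection of half-planes.

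For the boundary $\partial R$, given any $\alpha\in(0,1)$, Proposition~\ref{proposition: efficient calculation of g_alpha} furnishes a binary pair $P_1=(p^\star,1-p^\star)$, $P_2=(q^\star,1-q^\star)$ with $|P_1-P_2|=\varepsilon$ and $D_\alpha(P_1\|P_2)=g_\alpha(\varepsilon)$; coupled with the binary $Q$ prescribed by \eqref{Shayevitz function}, Corollary~\ref{corollary: Shayevitz} guarantees equality throughout the outer-bound chain. Hence $B_\alpha\triangleq(D(Q\|P_1),D(Q\|P_2))$ lies precisely on the $\alpha$-th line, and therefore on $\partial R$; as $\alpha$ sweeps $(0,1)$ these $B_\alpha$'s trace the entire envelope curve and are attained by binary triples.

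For the interior, I would use a \emph{foliation} by tighter-constraint regions. Let $R_{\tilde\varepsilon}$ denote the analogous convex region for the constraint $|P_1-P_2|\geq\tilde\varepsilon$. Since $g_\alpha(\tilde\varepsilon)$ is monotonically non-decreasing in $\tilde\varepsilon$ and, by Lemma~\ref{lemma: limit when the TV distance tends to 1}, $R_{\tilde\varepsilon}$ retreats to infinity as $\tilde\varepsilon\uparrow 2$, for every $(x_0,y_0)\in\mathrm{int}(R)$ the threshold
$$\varepsilon^\star\;\triangleq\;\sup\{\tilde\varepsilon\in[\varepsilon,2):(x_0,y_0)\in R_{\tilde\varepsilon}\}$$
satisfies $\varepsilon<\varepsilon^\star<2$ and $(x_0,y_0)\in\partial R_{\varepsilon^\star}$. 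Choosing $\alpha^\star\in(0,1)$ so that the $\alpha^\star$-line at parameter $\varepsilon^\star$ is a supporting line of $R_{\varepsilon^\star}$ at $(x_0,y_0)$, the binary triple built above for the pair $(\alpha^\star,\varepsilon^\star)$ has $|P_1-P_2|=\varepsilon^\star\geq\varepsilon$ and attains $(x_0,y_0)$.

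The main obstacle is to make this foliation argument rigorous. Specifically, I must verify joint continuity of $g_\alpha(\tilde\varepsilon)$ in $\tilde\varepsilon$ together with its strict monotonicity (so that $\tilde\varepsilon\mapsto R_{\tilde\varepsilon}$ is continuous in the inclusion order and genuinely sweeps out $\mathrm{int}(R)$), and argue that the slopes $-\alpha/(1-\alpha)$ for $\alpha\in(0,1)$ cover every supporting direction of $\partial R_{\varepsilon^\star}$ at smooth boundary points, so that a valid $\alpha^\star\in(0,1)$ always exists. Degenerate boundary points of $\partial R$ arising as limits $\alpha\to 0^+$ or $\alpha\to 1^-$ would need either a separate limiting argument or a direct extension of the Shayevitz-type construction.
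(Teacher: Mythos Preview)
Your proposal is correct and follows essentially the same route as the paper: the outer bound via Corollary~\ref{corollary: Shayevitz} combined with the definition of $g_\alpha$, boundary points via the binary construction of Proposition~\ref{proposition: efficient calculation of g_alpha} together with the equality case \eqref{Shayevitz function}, and interior points via the foliation $\{R_{\bar\varepsilon}\}_{\bar\varepsilon\in[\varepsilon,2)}$. The obstacles you flag---strict monotonicity and continuity of $g_\alpha(\cdot)$ in $\varepsilon$, and divergence as $\varepsilon\uparrow 2$---are precisely what the paper invokes (citing Lemma~\ref{lemma: limit when the TV distance tends to 1}) to justify the foliation step, and the paper does not treat slope coverage or the degenerate limits $\alpha\to 0^+,\,1^-$ in any more detail than you already have.
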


\begin{proof}
Let $P_1, P_2, Q$ be arbitrary probability measures which are mutually absolutely continuous
and satisfy the $\varepsilon$ separation condition for $P_1$ and $P_2$ in total variation.
In view of Corollary~\ref{corollary: Shayevitz} and since by definition
$D_{\alpha}(P_1 \| P_2) \geq g_{\alpha}(\varepsilon)$, it follows that the point
$\bigl( D(Q \| P_1), D(Q \| P_2) \bigr)$ satisfies
\begin{align} \label{eq: inequalities required to hold}
D(Q \| P_2) + \tfrac{\alpha}{1-\alpha} \cdot D(Q\|P_1) \geq g_{\alpha}(\varepsilon)
\end{align}
for every $\alpha \in (0,1)$; this implies that every such a point is either
on or above the convex envelope of the parameterized straight lines in \eqref{eq: straight lines}.

\par
We next prove that a point which is below the convex envelope of the lines in
\eqref{eq: straight lines} cannot be achieved under the constraint $|P_1 - P_2| \geq \varepsilon$.
The reason for this claim is because for such a point $\bigl( D(Q \| P_1), D(Q \| P_2) \bigr)$,
there is some $\alpha \in (0,1)$ for which
\begin{align}
D(Q \| P_2) + \tfrac{\alpha}{1-\alpha} \cdot D(Q\|P_1) < g_{\alpha}(\varepsilon)
\end{align}
Since under the $\varepsilon$ separation condition for $P_1$ and $P_2$ in total variation distance,
$D_{\alpha}(P_1 \| P_2) \geq g_{\alpha}(\varepsilon)$, then for such $\alpha \in (0,1)$,
inequality~\eqref{Shayevitz inequality} is violated; in view of Corollary~\ref{corollary: Shayevitz},
this yields that the point is not achievable under the constraint $|P_1-P_2| \geq \varepsilon$.
As an interim conclusion, it follows that the exact locus of the achievable points is the set of all
points in the plane $\bigl( D(Q \| P_1), D(Q \| P_2) \bigr)$ which are on or above the convex envelope
of the parameterized straight lines in \eqref{eq: straight lines} for $\alpha \in (0,1)$.

\par
The next step aims to show that an arbitrary point which is located at the boundary of this region
can be obtained by a triplet of probability measures $(P_1^\star, P_2^\star, Q^\star)$ which are defined on a binary
alphabet, and satisfy $|P_1^\star-P_2^\star|=\varepsilon$. To that end, note that every point which is on the boundary
of this region is a tangent point to one of the straight lines in \eqref{eq: straight lines} for some
$\alpha \in (0,1)$. Accordingly, the proper probability measures $P_1^\star$, $P_2^\star$ and $Q^*$
can be determined as follows for a given $\varepsilon \in (0,2)$:
\begin{enumerate}[a)]
\item Find the slope $s<0$ of the tangent line at the selected point on the boundary; in view of
\eqref{eq: straight lines}, $s=-\tfrac{\alpha}{1-\alpha}$ yields $\alpha = -\frac{s}{1-s} \in (0,1)$.
\item In view of Proposition~\ref{proposition: efficient calculation of g_alpha}, determine
$p_1^\star, p_2^\star \in (0,1)$ such that $|p_1^\star - p_2^\star| = \tfrac{\varepsilon}{2}$
and $d_{\alpha}(p_1^\star \| p_2^\star) = g_{\alpha}(\varepsilon)$. Consequently, let $P_1^\star$
and $P_2^\star$ be the probability measures which are defined on the binary alphabet with
$P_1^\star(0) = p_1^\star$ and $P_2^\star(0) = p_2^\star$.
\item The respective probability measure $Q^\star = Q^\star_{\alpha}$ is calculated from
\eqref{Shayevitz function}, and it is therefore also defined on the binary alphabet.
\end{enumerate}

\par
Finally, we show that every interior point in the achievable region can be attained
as well by a proper selection of $P_1^\star$, $P_2^\star$ and $Q^\star$ which are
defined on a binary alphabet.
To that end, note that every such interior point is located at the boundary of the
locus of $\bigl( D(Q \| P_1), D(Q \| P_2) \bigr)$ under the constraint $|P_1 - P_2| \geq \overbar{\varepsilon}$
with some $\overbar{\varepsilon} \in (\varepsilon, 2)$; this follows from the fact that
$g_{\alpha}(\cdot)$ is a strictly monotonically increasing and continuous function in
$(0,2)$, which tends to infinity as we let $\varepsilon$ tend to~2 (see
Lemma~\ref{lemma: limit when the TV distance tends to 1}). It therefore follows that the
suitable triplet of probability measures $(P_1^\star, P_2^\star, Q^\star)$ can be
obtained by the same algorithm used for points on the boundary of this region, except
for replacing $\varepsilon$ by the larger value~$\overbar{\varepsilon}$.

This concludes the proof by first characterizing the exact locus of points, and then
demonstrating that every point in this convex region (including its boundary) is attained
by probability measures which are defined on the binary alphabet; the proof is also constructive
in the sense of providing an algorithm to calculate such probability measures $P_1^\star,
P_2^\star, Q^\star$ for an arbitrary point in this closed and convex region.
\end{proof}

\begin{figure}[here!]
\begin{center}
\hspace*{-1.4cm}
\epsfig{file=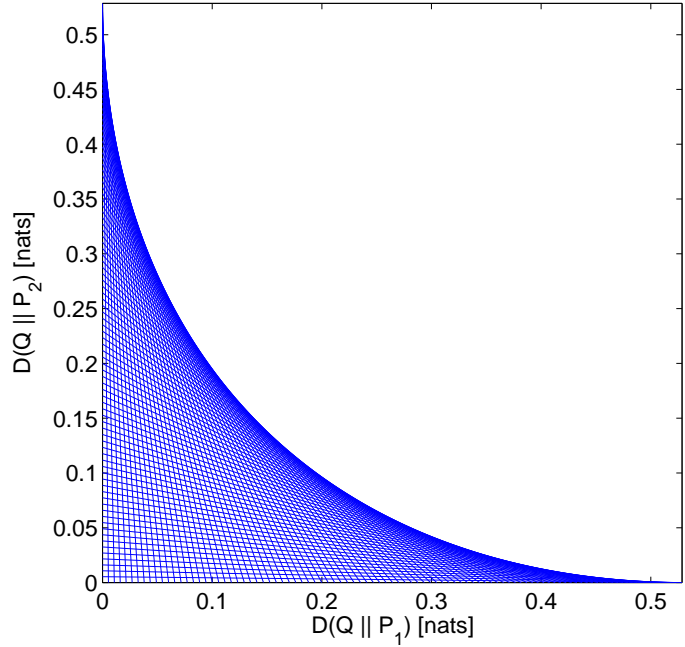,scale=0.62}
\caption{\label{Figure: achievable_region_epsilon1} The exact locus of
$(D(Q\|P_1), D(Q\|P_2))$ where $P_1, P_2$ are arbitrary probability
measures with $|P_1 - P_2| \geq 1$ with $\varepsilon=1$.
The exact locus of these relative entropies includes all the points on
and above the convex envelope of the straight lines in \eqref{eq: straight lines},
which is the convex and closed region painted in white.}
\end{center}
\end{figure}

\begin{figure}[here!]
\begin{center}
\epsfig{file=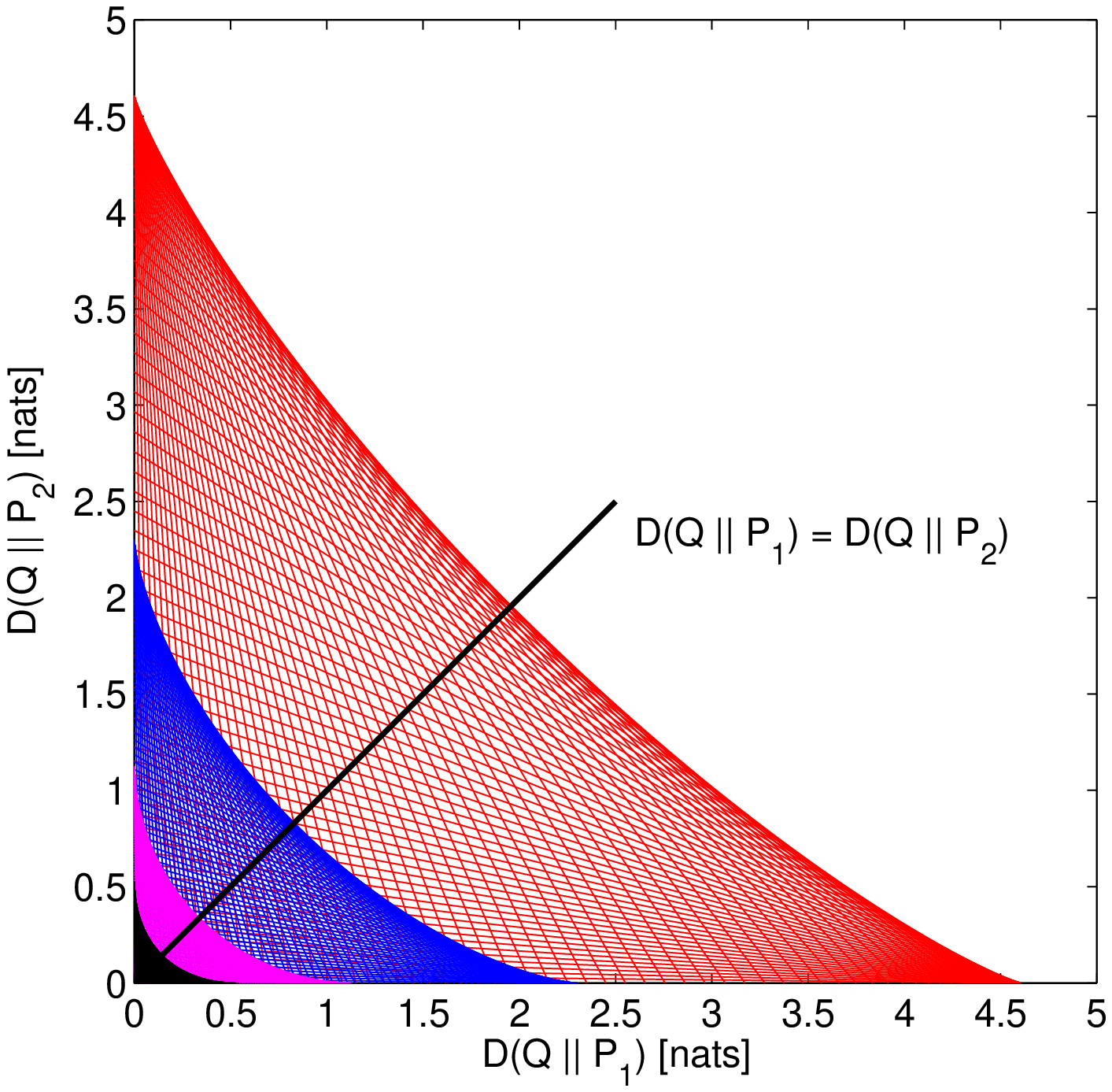,scale=0.62}
\caption{\label{Figure: achievable_regions_epsilon_100_140_180_198}
This plot shows the 4 exact loci of $(D(Q\|P_1), D(Q\|P_2))$ where
$P_1, P_2$ are arbitrary probability measures such that $|P_1 - P_2| \geq \varepsilon$,
with $\varepsilon = 1.00, 1.40, 1.80, 1.98$, and $Q \ll P_1, P_2$ is an arbitrary
probability measure. The exact locus which is above the convex envelope for the
respective value of $\varepsilon$ (painted in white) shrinks as the value of
$\varepsilon$ is increased, especially when $\varepsilon$ is close (from below) to~2.
The intersection of the boundary of the exact locus, for a given $\varepsilon \in [0,2)$,
with the straight line $D(Q\|P_1)=D(Q\|P_2)$ (passing through the origin) is at the point
$\bigl(-\tfrac12 \, \log(1-\varepsilon^2), \, -\tfrac12 \, \log(1-\varepsilon^2)\bigr)$;
the equal coordinates of this point are the minimum of the Chernoff information
subject to a given total variation distance $\varepsilon$.}
\end{center}
\end{figure}

As it is shown in Figure~\ref{Figure: achievable_regions_epsilon_100_140_180_198},
the boundaries of these regions become less curvy as $\varepsilon \uparrow 2$.

\subsection*{A Geometric Interpretation of the Minimal Chernoff Information with
a Constraint on the Variational Distance}
Consider the point in Figure~\ref{Figure: achievable_regions_epsilon_100_140_180_198}
which, in the plane of $\bigl( D(Q\|P_1), D(Q\|P_2) \bigr)$, is the intersection of the
straight line $D(Q\|P_1) = D(Q\|P_2)$ and the boundary of the convex region which is
characterized in Theorem~\ref{theorem: joint range of the relative entropies}
for an arbitrary $\varepsilon \in (0,2)$.

In view of the proof of Theorem~\ref{theorem: joint range of the relative entropies},
this intersection point satisfies $D(Q_{\alpha} \| P_1) = D(Q_{\alpha} \| P_2)$
for some $\alpha \in (0,1)$, for $P_1, P_2$ which are probability measures defined on a binary
alphabet with $|P_1-P_2| = \varepsilon$, and $Q_{\alpha}$ is given in
\eqref{Shayevitz function}. The equal coordinates of this intersection
point are therefore equal to the Chernoff information $C(P_1, P_2)$
(see \cite[Section~11.9]{Cover_Thomas}). Due to the symmetry of this
region with respect to the straight line $D(Q\|P_1) = D(Q\|P_2)$ (this follows from the symmetry
property $|P_1-P_2| = |P_2-P_1|$), the slope of the tangent line to the boundary of the
convex region at this intersection point is $s=-1$ (see
Figure~\ref{Figure: achievable_regions_epsilon_100_140_180_198}). This yields that
$\alpha = -\frac{s}{1-s} = \tfrac12$, and from Proposition~\ref{proposition: skew-symmetry property of g},
$g_{\alpha}(\varepsilon) = -\log\bigl(1- \frac14 \, \varepsilon^2 \bigr)$.
Hence, from \eqref{eq: straight lines}
with $\alpha = \tfrac12$, the equal coordinates of this intersection point are given by
\begin{align}
D(Q\|P_1) = D(Q\|P_2) = -\tfrac12 \, \log\bigl(1- \frac14 \, \varepsilon^2 \bigr).
\end{align}
Based on \cite[Proposition~2]{Sason_IT15}, this value is equal to the minimum of
the Chernoff information subject to an $\varepsilon$ separation constraints for
$P_1$ and $P_2$ in total variation distance.
We next calculate the probability measures $P_1^\star$, $P_2^\star$ and $Q^\star$
which attain this intersection point.
Eq.~\eqref{eq: optimization problem 1 for the Renyi divergence} with $\alpha = \frac12$ yields
\begin{align}
-2 \log \bigl( \sqrt{pq} + \sqrt{(1-p)(1-q)} \bigr) = -\log\bigl(1- \tfrac14 \varepsilon^2 \bigr)
\end{align}
such that $p,q \in [0,1]$ and $|p-q| = \frac{\varepsilon}{2}$.
A possible solution of this equation is $p = \frac{2+\varepsilon}{4}$ and $q = \frac{2-\varepsilon}{4}$,
so the respective probability measures $P_1^\star, P_2^\star$ which are defined on the binary alphabet
satisfy $P_1^\star(0) = \frac{2+\varepsilon}{4}$ and $P_2^\star(0) = \frac{2-\varepsilon}{4}$;
consequently, from \eqref{Shayevitz function}, $Q(0)=Q(1)=\tfrac12$ is the equiprobable
distribution on the binary alphabet.

As a byproduct of the characterization of the convex region in
Theorem~\ref{theorem: joint range of the relative entropies}, it follows that
the straight line $D(Q \| P_1) = D(Q \| P_2)$ (in the plane
of Figure~\ref{Figure: achievable_regions_epsilon_100_140_180_198})
intersects the boundary of the convex region which is specified in
Theorem~\ref{theorem: joint range of the relative entropies} at the point
whose coordinates are equal to the minimized Chernoff information subject to
the constraint $|P_1-P_2| \geq \varepsilon$. The equal coordinates of each of the 4
intersection points in Figure~\ref{Figure: achievable_regions_epsilon_100_140_180_198},
which refer to $\varepsilon = 1.00, 1.40, 1.80, 1.98$, are equal to
$-\tfrac12 \, \log\bigl(1- \frac14 \, \varepsilon^2 \bigr) = 0.144, 0.337, 0.830, 1.959$ nats,
respectively.

\section{A Performance Bound for Coded Communications via the R\'enyi Divergence}
\label{section: An Upper Bound on the ML Decoding Error Probability with the Renyi Divergence}

\subsection{New Exponential Upper Bound}
This section derives an exponential upper bound on the performance of binary
linear block codes, expressed in terms of the R\'enyi divergence.
Similarly to \cite{Hsu_Achilleas_IT08}, \cite{Hsu_Achilleas_IT10},
\cite{Jin_McEliece02}, \cite{MillerBurshtein}, \cite{Pfister_Siegel_IT03},
\cite[Section~3.B]{Sason_Urbanke03}, \cite{Shamai_Sason_IT02}, \cite{Shulman_Feder99}
and \cite{Twitto_IT07}, the upper bound in the next theorem quantifies the
degradation in the performance of block codes under ML decoding in terms of
the deviation of their distance spectra from the binomially distributed (average)
distance spectrum of the capacity-achieving ensemble of random block codes.

\vspace*{0.1cm}
\begin{theorem}
Consider a binary linear block code of length $N$ and rate
$R=\frac{\log(M)}{N}$ where $M$ designates the number of codewords.
Let $S_0=0$ and, for $l \in \{1, \ldots, N\}$, let $S_l$ be the number
of non-zero codewords of Hamming weight $l$. Assume that the transmission
of the code takes place over a memoryless, binary-input and output-symmetric
channel. Then, the block error probability under ML decoding satisfies
\begin{equation}
P_{\text{e}} = P_{\text{e}|0} <
\exp \left( -N \, \sup_{r \geq 1} \, \max_{0 \leq \rho' \leq \frac{1}{r}} \,
\left[E_0\left(\rho', \underline{q}=\bigl(\tfrac12, \tfrac12\bigr)\right) - \rho'
\left(rR + \frac{D_s(P_N \| Q_N)}{N}\right) \right] \right)
\label{eq: optimized error exponent}
\end{equation}
where $s \triangleq s(r) = \frac{r}{r-1}$ for $r \geq 1$
(with the convention that $s = \infty$ for $r=1$),
$Q_N$ is the binomial distribution with parameter $\tfrac12$ and $N$
independent trials (i.e., $Q_N(l) = 2^{-N} {N \choose l}$ for
$l \in \{0, 1, \ldots, N\}$), $P_N$ is the PMF defined by
$P_N(l) = \frac{S_l}{M-1}$ for $l \in \{0, \ldots, N\}$,
$D_s(\cdot \| \cdot)$ is the R\'{e}nyi divergence of order $s$ (i.e.,
$D_s(P\|Q) = \frac{1}{s-1} \, \log \left(\sum_x P(x)^s Q(x)^{1-s}\right)$
where $s>1$ here), and $E_0(\rho, \underline{q})$ designates the Gallager
random coding error exponent in \cite[Eq.~(5.6.14)]{Gallager_book1968}.
\label{theorem: channel coding theorem with the Renyi divergence}
\end{theorem}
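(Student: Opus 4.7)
The plan is to combine the Gallager bounding technique for specific linear codes over memoryless, binary-input output-symmetric (MBIOS) channels (as in \cite[Chapter~4]{Sason_Shamai_FnT} and \cite{Shamai_Sason_IT02}) with H\"older's inequality, so as to split the resulting bound into three factors: a code-size factor $M^{r\rho'}$, a R\'enyi-divergence factor $\exp(\rho'\,D_s(P_N\|Q_N))$ measuring the deviation of the normalized distance spectrum $P_N$ from the binomial reference $Q_N$, and a residual channel-dependent factor matching the Gallager random-coding exponent $E_0(\rho',\underline{q})$ with uniform binary input $\underline{q}=(\tfrac{1}{2},\tfrac{1}{2})$. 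By MBIOS symmetry, $P_{\text{e}}=P_{\text{e}|\mathbf{0}}$, so the all-zero codeword may be assumed transmitted.

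First, I would apply Gallager's $\rho$-trick with $\rho\in(0,1]$ to get
\[
P_{\text{e}|\mathbf{0}} \leq \sum_{\mathbf{y}} W_N(\mathbf{y}|\mathbf{0})^{1/(1+\rho)}\left(\sum_{\mathbf{x}'\neq\mathbf{0}} W_N(\mathbf{y}|\mathbf{x}')^{1/(1+\rho)}\right)^{\rho},
\]
group the codewords by Hamming weight (using the memoryless symmetric structure, which makes $W_N(\mathbf{y}|\mathbf{x}')$ depend on $\mathbf{x}'$ through its weight in the relevant statistics), and write the inner sum as $(M-1)\sum_l P_N(l)\,\Psi_l(\mathbf{y})$ for a weight-dependent quantity $\Psi_l(\mathbf{y})$. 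Second, I would apply H\"older's inequality with conjugate exponents $r\geq 1$ and $s=r/(r-1)$,
\[
\sum_l P_N(l)\,\Psi_l(\mathbf{y}) \leq \left(\sum_l P_N(l)^s\,Q_N(l)^{1-s}\right)^{1/s}\left(\sum_l Q_N(l)\,\Psi_l(\mathbf{y})^r\right)^{1/r} = \exp\!\left(\tfrac{D_s(P_N\|Q_N)}{r}\right)\left(\sum_l Q_N(l)\,\Psi_l(\mathbf{y})^r\right)^{1/r},
\]
where the last equality uses \eqref{eq:RD1} with $(s-1)/s=1/r$. Setting $\rho=r\rho'$, so that $\rho'\in[0,1/r]$ corresponds to $\rho\in[0,1]$, and raising the H\"older estimate to the $\rho$-th power produces the R\'enyi factor $\exp(\rho'\,D_s(P_N\|Q_N))$ and the code-size factor $(M-1)^\rho\leq \exp(N\,r\rho'\,R)$.

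Third, I would identify the surviving channel-dependent piece
\[
\sum_{\mathbf{y}} W_N(\mathbf{y}|\mathbf{0})^{1/(1+\rho)}\left(\sum_l Q_N(l)\,\Psi_l(\mathbf{y})^r\right)^{\rho/r}
\]
with the Gallager random-coding expression: because $Q_N(l)=2^{-N}\binom{N}{l}$ is precisely the Hamming-weight distribution of a codeword drawn i.i.d.\ uniformly from $\{0,1\}^N$, a short MBIOS symmetrization rewrites this quantity as $\exp(-N\,E_0(\rho',\underline{q}))$ with $\underline{q}=(\tfrac{1}{2},\tfrac{1}{2})$. Combining the three factors gives, for every $r\geq 1$ and $\rho'\in[0,1/r]$,
\[
P_{\text{e}} < \exp\!\left(-N\left[E_0(\rho',\underline{q}) - \rho'\!\left(rR + \tfrac{D_s(P_N\|Q_N)}{N}\right)\right]\right),
\]
and taking the supremum over $r$ and the maximum over $\rho'$ then yields \eqref{eq: optimized error exponent}.

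The main technical obstacle is the first step, the reduction to a distance-spectrum expression: the weight-$l$ codewords of a specific linear block code are not in general invariant under coordinate permutations, so showing that the inner Gallager sum can be controlled by a quantity $\Psi_l(\mathbf{y})$ depending only on $l$ and the channel requires careful use of the MBIOS output symmetry together with the memoryless structure, in a way that does not loosen the bound beyond what the final $E_0$--$D_s$ form can absorb. Once this distance-spectrum reduction is justified, the H\"older step and the identification of the residual factor with $E_0(\rho',\underline{q})$ are routine, and only the reparametrization $\rho=r\rho'$ and the two-parameter optimization remain.
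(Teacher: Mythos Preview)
Your overall strategy---Gallager-type bound, group by Hamming weight, apply H\"older with conjugate exponents $(r,s)$ to peel off the R\'enyi factor $\exp(\rho' D_s(P_N\|Q_N))$, then identify the leftover channel term with $E_0(\rho',\underline{q})$---is exactly the route the paper takes. The paper simply quotes \cite[Eq.~(A17)]{Shamai_Sason_IT02} as a ready-made intermediate bound (which already contains the distance-spectrum reduction, the H\"older split, and the $E_0$ identification) and then observes that the remaining factor equals $\exp(\rho' D_s(P_N\|Q_N))$, whereas \cite{Shamai_Sason_IT02} had loosened it to the relative entropy.

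There is, however, one genuine technical point in your sketch that would not go through as written. You start from the plain Gallager bound with the Bhattacharyya-type parameter already fixed at $\lambda=\tfrac{1}{1+\rho}$, and then after H\"older you want the residual $\mathbf{y}$-sum to equal $\exp(-N E_0(\rho',\underline{q}))$ with $\rho'=\rho/r$. But Gallager's $E_0(\rho',\underline{q})$ requires the inner exponent $\tfrac{1}{1+\rho'}$, not $\tfrac{1}{1+\rho}$, so the ``short MBIOS symmetrization'' you invoke does not produce $E_0(\rho')$ from your displayed expression. The derivation in \cite[Appendix~A]{Shamai_Sason_IT02} that the paper relies on avoids this by working in the Duman--Salehi framework: it keeps $\lambda$ (and a product-form tilting measure) free throughout the distance-spectrum reduction and the H\"older step, and only \emph{after} H\"older sets $\lambda=\tfrac{1}{1+\rho'}$ and optimizes the tilting measure, which is precisely what makes the residual factor collapse to $\exp(-N E_0(\rho',\underline{q}))$. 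This extra freedom is also what resolves the obstacle you flag---the lack of permutation invariance of the weight-$l$ codewords---since the product tilting measure together with MBIOS symmetry is what makes the bound depend on codewords only through their Hamming weight. So your plan is right, but the starting bound must be the more general Duman--Salehi form rather than the one-parameter Gallager bound.
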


\vspace*{0.1cm}
Before proving Theorem~\ref{theorem: channel coding theorem with the Renyi divergence},
we relate this exponential bound to previously reported bounds.

\vspace*{0.1cm}
\begin{remark}
\label{remark: generalization of the Shulman-Feder bound}
Note that the loosening of the bound by taking $r=1$ and, respectively,
$s = \infty$ gives the upper bound
\begin{align*}
P_{\text{e}} = P_{\text{e}|0} & < \exp \left( -N \, \max_{0 \leq \rho' \leq 1}
\left[E_0\left(\rho', \underline{q}=\bigl(\tfrac12, \tfrac12\bigr)\right) - \rho'
\left(R + \frac{D_{\infty}(P_N \| Q_N)}{N}\right) \right]
\right) \\[0.1cm]
& \stackrel{(\text{a})}{=} \exp \left( -N \, E_{\text{r}}\left(R +
\frac{D_{\infty}(P_N \| Q_N)}{N}\right) \right) \nonumber \\[0.1cm]
& \stackrel{(\text{b})}{=} \exp \left( -N \, E_{\text{r}}\left(R + \frac{1}{N} \,
\log \max_{0 \leq l \leq N} \frac{P_N(l)}{Q_N(l)}
\right) \right) \\[0.1cm]
& \stackrel{(\text{c})}{=} \exp \left( -N \, E_{\text{r}}\left(R + \frac{1}{N} \,
\log \max_{0 \leq l \leq N} \frac{S_l}{e^{-N(\log 2-R)} {N\choose{l}}}
\right) \right)
\end{align*}
which coincides with the Shulman-Feder bound \cite{Shulman_Feder99}. Equality~(a)
follows from the definition of the Gallager random coding exponent $E_{\text{r}}(R)$
in \cite[Eq.~(5.6.16)]{Gallager_book1968} where the symmetric input distribution
$\underline{q} = \bigl( \tfrac12, \tfrac12 \bigr)$ is the optimal input
distribution for any memoryless, binary-input output-symmetric channel, equality~(b)
follows from the expression of the R\'{e}nyi divergence of order infinity  (see, e.g.,
\cite[Theorem~6]{ErvenH14}), and equality~(c) follows from the definition of
the PMFs $P_N$ and $Q_N$ in Theorem~\ref{theorem: channel coding theorem with the Renyi divergence}.
\end{remark}

\vspace*{0.1cm}
\begin{remark}
The proof of Theorem~\ref{theorem: channel coding theorem with the Renyi divergence}
is based on the framework of the Gallager bounds in
\cite[Chapter~4]{Sason_Shamai_FnT} and \cite{Shamai_Sason_IT02}.
Specifically, it has an overlap with \cite[Appendix~A]{Shamai_Sason_IT02}.
Unlike the analysis in \cite[Appendix~A]{Shamai_Sason_IT02}, working with the R\'{e}nyi
divergence of order $s \geq 1$, instead of the relative entropy as a lower bound
(see \cite[Eq.~(A19)]{Shamai_Sason_IT02}) reveals a need for an optimization of the error
exponent, which leads to the error exponent in
Theorem~\ref{theorem: channel coding theorem with the Renyi divergence}. Namely, if the
value of $r \geq 1$ is increased then the value of $s = \frac{r}{r-1} \geq 1$ is decreased,
and therefore $D_s(P_N \| Q_N)$ is also decreased (unless it is zero, see \cite[Theorem~3]{ErvenH14};
note that $P_N$ and $Q_N$ do not depend on the parameters $r$ and $s$, so they stay un-affected by
varying the values of these parameters). The maximization of the error exponent in
Theorem~\ref{theorem: channel coding theorem with the Renyi divergence} aims at finding
a proper balance between the two summands $r R$ and $\frac{D_s(P_N \| Q_N)}{N}$ on
the right-hand side of \eqref{eq: optimized error exponent}, while also performing an optimization
over the second dependent variable $\rho' \in \bigl[0, \frac{1}{r} \bigr]$.
\end{remark}

\par
We proceed now with the proof of Theorem~\ref{theorem: channel coding theorem with the Renyi divergence}.

\begin{proof}
The proof of Theorem~\ref{theorem: channel coding theorem with the Renyi divergence}
is based on the framework of the Gallager bounds in
\cite[Chapter~4]{Sason_Shamai_FnT} and \cite{Shamai_Sason_IT02}.
Specifically, it relies on \cite[Appendix~A]{Shamai_Sason_IT02}.
We explain in the following how our proof differs from
the analysis in \cite[Appendix~A]{Shamai_Sason_IT02}.
From \cite[Eq.~(A17)]{Shamai_Sason_IT02},
we have that for every $\rho' \in \bigl[0, \frac{1}{r}\bigr]$
\begin{align}
P_{\text{e}|0} & < M^{\rho' r} \, \exp\left(-N \, E_0 \Bigl(\rho' , \underline{q} =
\Bigl(\frac{1}{2}, \frac{1}{2} \Bigr) \Bigr)\right) \, \left(\sum_{l=0}^N\, Q_N(l) \,
\left(\frac{P_N(l)}{Q_N(l)}\right)^s\right)^{\frac{r \rho'}{s}}.
\label{eq: bound with new parameters of rho and lambda}
\end{align}

From this point, we deviate from the analysis in \cite[Appendix~A]{Shamai_Sason_IT02}.
Since $\frac{1}{r}+\frac{1}{s}=1$ where $r, s \geq 1$,
we have
\begin{align}
& \left(\sum_{l=0}^N\, Q_N(l) \, \left(\frac{P_N(l)}{Q_N(l)}\right)^s\right)^{\frac{r \rho'}{s}} \nonumber \\[0.1cm]
& = \exp \left(\frac{r \rho'}{s} \cdot \log \left( \sum_{l=0}^N\, P_N(l)^s \, Q_N(l)^{1-s} \right) \right) \nonumber \\[0.1cm]
& = \exp \left(\frac{\rho'}{s-1} \cdot \log \left( \sum_{l=0}^N\, P_N(l)^s \, Q_N(l)^{1-s} \right) \right) \nonumber \\
& = \exp\bigl(\rho' D_s(P_N \| Q_N)\bigr) \label{eq: term with Renyi divergence}
\end{align}
where $D_s(P_N \| Q_N)$ is the R\'{e}nyi divergence of order $s$ from $P_N$ to $Q_N$.
This enables to refer to the R\'{e}nyi divergence of order $s \geq 1$, instead of lower
bounding this quantity by the relative entropy, and consequently loosening the
bound (see \cite[Eq.~(A19)]{Shamai_Sason_IT02}). Note that since the R\'{e}nyi divergence
is monotonically increasing in its order (see, e.g., \cite[Theorem~3]{ErvenH14}) and the
R\'{e}nyi divergence of order~1 is particularized to the relative entropy,
the inequality $D_s(P_N \| Q_N) \geq D(P_N \| Q_N)$ holds.
The combination of \eqref{eq: bound with new parameters of rho and lambda} and
\eqref{eq: term with Renyi divergence} gives
\begin{align}
P_{\text{e}|0} & < \exp(N R \rho' r) \, \exp\left(-N \, E_0 \Bigl(\rho' , \underline{q} =
\bigl(\tfrac12, \tfrac12 \bigr) \Bigr)\right) \, \exp\bigl(\rho' D_s(P_N \| Q_N)\bigr)
\nonumber \\[0.2cm]
& = \exp \left(-N \left[ E_0 \Bigl(\rho', \underline{q} =
\bigl(\tfrac12, \tfrac12 \bigr) \Bigr) - \rho' \biggl( r R + \frac{D_s(P_N \| Q_N)}{N}
\biggr) \right] \right),
\quad 0 \leq \rho' \leq \frac{1}{r}.
\label{eq: new upper bound on the decoding error probability}
\end{align}
A maximization of the error exponent in
\eqref{eq: new upper bound on the decoding error probability}
with respect to the parameters $r \geq 1$ and $\rho' \in [0, \frac{1}{r}]$
(recall that $s = s(r) = \frac{r}{r-1} > 1$) gives
the upper bound in \eqref{eq: optimized error exponent}.
\end{proof}

\subsection{Application of Theorem~\ref{theorem: channel coding theorem with the Renyi divergence}}
An efficient use of Theorem~\ref{theorem: channel coding theorem with the Renyi divergence} for
the performance evaluation of binary linear block codes (or coee ensembles) is suggested in the
following by borrowing a concept of bounding from \cite{MillerBurshtein},
which has been further studied, e.g., in \cite{Sason_Shamai_FnT}, \cite{Sason_Urbanke03},
\cite{Twitto_IT07}, and combining it with the new bound in
Theorem~\ref{theorem: channel coding theorem with the Renyi divergence}.
In order to utilize the Shulman-Feder bound for binary linear block codes
in a clever way, it has been suggested in \cite{MillerBurshtein} to partition
the binary linear block code $\mathcal{C}$ into two subcodes
$\mathcal{C}_1$ and $\mathcal{C}_2$ where $\mathcal{C}_1 \cup \mathcal{C}_2 = \mathcal{C}$
and $\mathcal{C}_1 \cap \mathcal{C}_2 = \{0\}$ is the all-zero codeword. The
first subcode $\mathcal{C}_1$ contains the all-zero codeword and all the codewords
of $\mathcal{C}$ whose Hamming weights $l$ belong to a subset
$\mathcal{L} \subseteq \{1,2,...,N\}$, while
$\mathcal{C}_2$ contains the other codewords of $\mathcal{C}$ which have Hamming
weights of $l \in \mathcal{L}^{\mathrm{c}} \triangleq \{1,2,...,N\}\setminus
\mathcal{L}$, together with the all-zero codeword. From the symmetry of the
channel, $P_{\text{e}}=P_{\text{e}|0} \leq
P_{\text{e}|0}(\mathcal{C}_1)+P_{\text{e}|0}(\mathcal{C}_2)$
where $P_{\text{e}|0}(\mathcal{C}_1)$ and
$P_{\text{e}|0}(\mathcal{C}_2)$ designate the conditional ML
decoding error probabilities of $\mathcal{C}_1$ and
$\mathcal{C}_2$, respectively, given that the all-zero codeword is
transmitted. Note that although the code $\mathcal{C}$ is
linear, its two subcodes $\mathcal{C}_1$ and $\mathcal{C}_2$ are in
general non-linear. One can rely on different upper bounds
on the conditional error probabilities
$P_{\text{e}|0}(\mathcal{C}_1)$ and
$P_{\text{e}|0}(\mathcal{C}_2)$, i.e., we may bound
$P_{\text{e}|0}(\mathcal{C}_1)$ by invoking
Theorem~\ref{theorem: channel coding theorem with the Renyi divergence},
due to its tightening of the Shulman-Feder bound (see
Remark~\ref{remark: generalization of the Shulman-Feder bound}),
and also rely on an alternative approach for obtaining an upper bound on
$P_{\text{e}|0}(\mathcal{C}_2)$ (e.g., it is possible to rely on the union bound
with respect to the fixed composition codes of the subcode
$\mathcal{C}_2$). The idea behind this partitioning is to
include in the subcode $\mathcal{C}_1$ the codewords of all the Hamming weights
whose distance spectrum is close enough to the binomial distribution $Q_N$ (see
Theorem~\ref{theorem: channel coding theorem with the Renyi divergence}) in
the sense that the additional term $\frac{D_s(P_N \| Q_N)}{N}$ in the exponent of
\eqref{eq: optimized error exponent} has a marginal effect on the conditional
ML decoding error probability of the subcode $\mathcal{C}_1$.

Theorem~\ref{theorem: channel coding theorem with the Renyi divergence}
can be applied as well to {\em ensembles} of binary linear block codes. The verify this claim,
let $\mathcal{C}$ be an ensemble of binary linear block codes.
The proof of Theorem~\ref{theorem: channel coding theorem with the Renyi divergence}
follows from the Duman and Salehi bounding technique \cite{Shamai_Sason_IT02} which leads
to the derivation of \cite[Eq.~(A.11)]{Shamai_Sason_IT02}. By taking the expectation on the RHS
of \cite[Eq.~(A.11)]{Shamai_Sason_IT02} with respect to the code ensemble $\mathcal{C}$ and invoking
Jensen's inequality, the same bound holds while $S_l$, as it is defined in
Theorem~\ref{theorem: channel coding theorem with the Renyi divergence}, is replaced
by the expectation $\overline{S_l} \triangleq \expectation_{\mathcal{C}}\bigl[S_l\bigr]$ with
respect to the code ensemble $\mathcal{C}$. This enables to replace $P_N$ on the RHS of
\eqref{eq: optimized error exponent} with $\overline{P}_N$ where  $$\overline{P}_N(l) \triangleq \frac{\expectation_{\mathcal{C}}\bigl[S_l\bigr]}{M-1}, \quad \forall \, l \in \{0, \ldots, N\},$$
which therefore justifies the generalization of
Theorem~\ref{theorem: channel coding theorem with the Renyi divergence} to code ensembles
of binary linear block codes.

As it is exemplified in Section~\ref{subsection: turbo},
Theorem~\ref{theorem: channel coding theorem with the Renyi divergence} can be efficiently
applied to ensembles of turbo-like codes in the same way that it was demonstrated to be efficient
in \cite{Twitto_IT07}. Similarly to Theorem~\ref{theorem: channel coding theorem with the Renyi divergence},
the bound in \cite[Theorem~3.1]{Twitto_IT07} forms another refinement of the Shulman-Feder bound, and the
novelty in the former bound is the obtained tightening of the Shulman-Feder bound via the use of the
R\'{e}nyi divergence.

\subsection{An Example: Performance Bounds for an Ensemble of Turbo-Block Codes}
\label{subsection: turbo}
We conclude this section by an example which applies this bounding technique to the
ensemble of uniformly interleaved turbo codes whose two component codes are chosen
uniformly at random from the ensemble of (1072, 1000) binary systematic linear block
codes. The transmission of these
codes takes place over an additive white Gaussian noise (AWGN) channel, and the codes
are BPSK modulated and coherently detected. The calculation of the average distance
spectrum of this ensemble has been performed in \cite[Section~5.D]{Twitto_IT07}, which
is required for the calculation of the upper bound in \eqref{eq: optimized error exponent}
where the PD $P_N$ is replaced by its expected value over the ensemble
(i.e., the normalization of the average distance spectrum by the number of codewords, as it
is defined in Theorem~\ref{theorem: channel coding theorem with the Renyi divergence}).
In the following, two upper bounds on the block error probability are compared under
ML decoding: the first one is the tangential-sphere bound (TSB) of Herzberg and Poltyrev
(see \cite{HerzbergP}, \cite{Poltyrev_IT1994}, \cite[Section~3.2.1]{Sason_Shamai_FnT}),
and the second bound follows from the suggested combination of the union bound and
Theorem~\ref{theorem: channel coding theorem with the Renyi divergence}.
Note that an optimal partitioning has been performed, in a way which is conceptually
similar to \cite[Algorithm~1]{Twitto_IT07}, for obtaining the tightest bound
which is obtained by combining the union bound and
Theorem~\ref{theorem: channel coding theorem with the Renyi divergence}.

A comparison of the two bounds shows an advantage of the latter combined bound over the TSB
in a similar way to \cite[upper plot of Fig.~8]{Twitto_IT07} (e.g., providing a gain of about
0.2~dB over the TSB for a block error probability of $10^{-3}$). Note that the Shulman-Feder bound
is rather loose in this case due to the significant deviation of the ensemble distance spectrum
from the binomial distribution at low and high Hamming weights. Furthermore, we note that the
advantage of the proposed bound over the TSB in this example is consistent with
the analysis in \cite{Poltyrev_IT1994} and \cite{Twitto_Sason_IT07}, demonstrating a gap between
the random coding error exponent of Gallager and the corresponding
error exponents that follow from the TSB and some of its improved versions. Recall
that the random coding error exponent of Gallager achieves the channel capacity,
whereas the random coding error exponent that follows from the TSB (or some of its
improved variants) does not achieve the capacity of a binary-input AWGN channel for
BPSK modulated fully random block codes, where the gap to capacity is especially
pronounced for high coding rates. In this example, the rate of the ensemble
is 0.8741~bits per channel use.

\appendices

\section{Proofs of Lemmas~\ref{lemma: limit when the TV distance tends to 1}
and~\ref{lemma: restriction of the minimization to 2-element probability distributions}}
\renewcommand{\thelemma}{I.\arabic{lemma}}
\renewcommand{\theequation}{I.\arabic{equation}}
\setcounter{equation}{0}
\setcounter{lemma}{0}
\label{Appendix: proofs of Lemma 1 and 2}

\subsection{Proof of Lemma~\ref{lemma: limit when the TV distance tends to 1}}
\label{Appendix: proof of Lemma 1}

For $\alpha = \tfrac12$, $D_{\frac{1}{2}}(P\|Q) = -2 \log Z(P,Q)$
where $Z(P,Q) \triangleq \sum_x \sqrt{P(x) Q(x)}$ denotes the Bhattacharyya
coefficient between the two PDs $P, Q$. We have
\begin{align} \label{eq: D12 vs TV}
D_{\frac{1}{2}}(P\|Q) \geq -\log\left(1- \tfrac14 \, \varepsilon^2\right)
\end{align}
where $|P-Q| = \varepsilon$ (see, e.g., \cite[Proposition~1]{Sason_IT15};
inequality \eqref{eq: D12 vs TV} is known in quantum information theory
with respect to the relation between the trace distance and fidelity
\cite[Section~9.3]{Wilde}). Hence, \eqref{eq: D12 vs TV} implies that
\eqref{eq: limit when the TV distance tends to 1} holds for $\alpha=\tfrac12$.
Since $D_{\alpha}(P\|Q)$ is monotonically increasing in its order $\alpha$
(see \cite[Theorem~3]{ErvenH14}), it follows that
\eqref{eq: limit when the TV distance tends to 1} also holds for
$\alpha \geq \tfrac12$. Finally, due to the skew-symmetry property of $D_{\alpha}$
(see \cite[Proposition~2]{ErvenH14}) where
$D_{\alpha}(P\|Q) = \left(\tfrac{\alpha}{1-\alpha}\right) D_{1-\alpha}(Q\|P)$
for $\alpha \in (0,1)$, and since the total variation distance is a
symmetric measure and $\tfrac{\alpha}{1-\alpha} > 0$ for $\alpha \in (0,1)$,
the satisfiability of \eqref{eq: limit when the TV distance tends to 1} for
$\alpha \in (\tfrac12, 1)$ yields that it also holds for
$\alpha \in (0, \tfrac12)$.

\subsection{Proof of Lemma~\ref{lemma: restriction of the minimization to 2-element probability distributions}}
\label{Appendix: proof of Lemma 2}

Let $P_1 \ll P_2$ be probability measures which are defined on a common measurable space $(\set{A}, \mathscr{F})$.
Denote by $\phi \colon \mathcal{A} \rightarrow \{1, 2\}$ the mapping given by
\begin{align*}
\phi(x) = \left\{
\begin{array}{ll}
1, & \quad \mbox{if $\frac{\mathrm{d}P_1}{\mathrm{d}P_2}(x) \geq 1$,} \\
2, & \quad \mbox{if $\frac{\mathrm{d}P_1}{\mathrm{d}P_2}(x) < 1$}
\end{array}
\right.
\end{align*}
and let $Q_i$, for $i \in \{1, 2\}$, be given by
\begin{equation}
Q_i(j) \triangleq \int_{\{x \in \set{A} \colon \phi(x)=j\}}
\mathrm{d}P_i(x), \quad \forall \, i, j \in \{1, 2\}.
\label{eq: Q1, Q2}
\end{equation}
Consequently, we have
\begin{align} \label{eq: same TV}
|P_1-P_2| & = \int_{\set{A}} \left|\frac{\mathrm{d}P_1}{\mathrm{d}P_2}(x) - 1\right| \, \mathrm{d}P_2(x) \nonumber \\[0.1cm]
& = \int_{\{x \in \set{A} \colon \phi(x)=1\}} \left(\frac{\mathrm{d}P_1}{\mathrm{d}P_2}(x) - 1\right) \, \mathrm{d}P_2(x) +
\int_{\{x \in \set{A} \colon \phi(x)=2\}} \left(1 - \frac{\mathrm{d}P_1}{\mathrm{d}P_2}(x)\right) \, \mathrm{d}P_2(x) \nonumber \\[0.1cm]
& = \bigl( Q_1(1) - Q_2(1) \bigr) + \bigl( Q_2(2) - Q_1(2) \bigr) \nonumber \\
& = \sum_{j \in \{1, 2\}} \bigl|Q_1(j)-Q_2(j)\bigr| \nonumber \\
& = |Q_1-Q_2|.
\end{align}
From the data processing theorem for the R\'{e}nyi divergence
(see \cite[Theorem~9]{ErvenH14}),
\begin{equation}
D_{\alpha}(P_1 \| P_2) \geq D_{\alpha}(Q_1 \| Q_2)
\label{eq: data processing inequality}
\end{equation}
where $Q_1$ and $Q_2$ are the probability measures which are defined on the binary alphabet (see
\eqref{eq: Q1, Q2}). The lemma follows by combining \eqref{eq: same TV} and
\eqref{eq: data processing inequality}.

\section{Proof of Proposition~\ref{proposition: skew-symmetry property of g}}
\renewcommand{\thelemma}{II.\arabic{lemma}}
\renewcommand{\theequation}{II.\arabic{equation}}
\setcounter{equation}{0}
\setcounter{lemma}{0}
\label{Appendix: proof of proposition on skew-symmetry of g}

Eq.~\eqref{eq: g for alpha=1/2} follows from the equality
$D_{\frac{1}{2}}(P\|Q) = -2 \log Z(P,Q)$ where $Z(P,Q)$ is
the Bhattacharyya coefficient between $P, Q$,
and since (see \cite[Proposition~1]{Sason_IT15})
$$\max_{P, Q \colon |P-Q| =\varepsilon} Z(P,Q) =
\sqrt{1- \tfrac14 \, \varepsilon^2}, \quad \forall \, \varepsilon \in [0,2).$$

To prove \eqref{eq: g for alpha=2}, note that
$D_2(P_1 \| P_2) = \log\bigl(1+\chi^2(P_1,P_2)\bigr)$
where $$\chi^2(P_1 \| P_2) \triangleq
\int \left(\frac{\mathrm{d}P_1}{\mathrm{d}P_2} - 1 \right)^2 \, \mathrm{d}P_2$$
denotes the $\chi^2$-divergence between the probability measures
$P_1$ and $P_2$ (which is the Hellinger divergence of order~2).
One can derive a closed-form expression for $g_2$ by relying on
the closed-form solution of a minimization of the $\chi^2$-divergence
$\chi^2(P_1 \| P_2)$ subject to the constraint $|P_1 - P_2| = \varepsilon \in [0,2)$,
which is given by (see \cite[Eq.~(58)]{ReidW11})
\begin{align*}
\min_{P_1, P_2 \colon |P_1-P_2| = \varepsilon}
\chi^2(P_1 \| P_2) = \left\{
\begin{array}{ll}
\varepsilon^2, & \quad \mbox{if $\varepsilon \in [0,1]$,} \\[0.1cm]
\frac{\varepsilon}{2-\varepsilon}, & \quad
\mbox{if $\varepsilon \in (1,2).$}
\end{array}
\right.
\end{align*}

Eq.~\eqref{eq: skew-symmetry property of g} follows from the
skew-symmetry property of the R\'{e}nyi divergence
\cite[Proposition~2]{ErvenH14}.

The lower bound on $g_{\alpha}$ in \eqref{eq: lower bound on g}
follows from
\eqref{eq: optimization problem 1 for the Renyi divergence},
which implies that for $\alpha \in (0,1)$ and $\varepsilon \in [0,2)$
\begin{equation}
g_{\alpha}(\varepsilon) =
\frac{\log \Bigl( \max_{p, q \in [0,1] \colon |p-q| \geq \frac{\varepsilon}{2}}
\left(p^{\alpha} q^{1-\alpha} + (1-p)^{\alpha}
(1-q)^{1-\alpha} \right) \Bigr)}{\alpha-1}
\label{eq: an equivalent expression for g}
\end{equation}
and, we have
\begin{align}
0 & \leq \max_{p, q \in [0,1] \colon |p-q| \geq \frac{\varepsilon}{2}}
\bigl(p^{\alpha} q^{1-\alpha} + (1-p)^{\alpha}
(1-q)^{1-\alpha} \bigr) \nonumber \\[0.1cm]
& \leq \max_{p, q \in [0,1] \colon |p-q| \geq \frac{\varepsilon}{2}}
p^{\alpha} q^{1-\alpha} +
\max_{p, q \in [0,1] \colon |p-q| \geq \frac{\varepsilon}{2}}
(1-p)^{\alpha} (1-q)^{1-\alpha} \nonumber \\[0.1cm]
& = 2 \max_{p, q \in [0,1] \colon |p-q| \geq \frac{\varepsilon}{2}}
p^{\alpha} q^{1-\alpha} \nonumber \\[0.1cm]
& = 2 \, \max \left\{ \bigl(1-\tfrac12 \, \varepsilon \bigr)^{\alpha}, \,
\bigl(1-\tfrac12 \, \varepsilon \bigr)^{1-\alpha} \right\}.
\label{eq: upper bound on the max term for g}
\end{align}
The lower bound on $g_{\alpha}$ in \eqref{eq: lower bound on g}
follows from the combination of
\eqref{eq: an equivalent expression for g} and
\eqref{eq: upper bound on the max term for g}.

\section{Proof of Lemma~\ref{lemma: monotonicity of f}}
\renewcommand{\thelemma}{III.\arabic{lemma}}
\renewcommand{\theequation}{III.\arabic{equation}}
\setcounter{equation}{0}
\setcounter{lemma}{0}
\label{Appendix: proof of lemma on monotonicity}

For $\alpha \in (0,1)$ and $\varepsilon' \in (0,1)$, we have
\begin{align*}
& \lim_{q \rightarrow 0^+} \left(1+\frac{\varepsilon'}{q}\right)^{\alpha-1} = 0,
\qquad \lim_{q \rightarrow 0^+} \left(1+\frac{\varepsilon'}{q}\right)^\alpha = +\infty, \\
& \Longrightarrow \lim_{q \rightarrow 0^+} f_{\alpha, \varepsilon'}(q)
= \lim_{q \rightarrow 0^+} \frac{(1-\varepsilon')^{\alpha-1}}{\left(1+
\frac{\varepsilon'}{q}\right)^\alpha -(1-\varepsilon')^\alpha} = 0,
\end{align*}
and
\begin{align*}
& \lim_{q \rightarrow (1-\varepsilon')^{-}}
\left(1-\frac{\varepsilon'}{1-q}\right)^{\alpha-1} = +\infty, \qquad
\lim_{q \rightarrow (1-\varepsilon')^{-}}
\left(1-\frac{\varepsilon'}{1-q}\right)^\alpha = 0, \\
& \Longrightarrow \lim_{q \rightarrow (1-\varepsilon')^{-}}
f_{\alpha, \varepsilon'}(q) = \lim_{q \rightarrow (1-\varepsilon')^{-}}
\frac{\left(1-\frac{\varepsilon'}{1-q}\right)^{\alpha-1}
- (1-\varepsilon')^{1-\alpha}}{(1-\varepsilon')^{-\alpha}
-\left(1-\frac{\varepsilon'}{1-q}\right)^\alpha} = +\infty.
\end{align*}
This proves the two limits in \eqref{eq: limits of f at the 2 endpoints of the interval}.

We prove in the following that $f_{\alpha, \varepsilon'}(\cdot)$ is
strictly increasing on the interval
$\bigl[\frac{1-\varepsilon'}{2}, 1-\varepsilon')$, and we also prove
later in this appendix that this function is monotonically increasing
on the interval $\bigl(0, \frac{1-\varepsilon'}{2}\bigr]$. These
two parts of the proof yield that $f_{\alpha, \varepsilon'}(\cdot)$
is strictly monotonically increasing on the interval $(0, 1-\varepsilon')$.
The positivity of $f_{\alpha, \varepsilon'}$ on $(0, 1-\varepsilon')$
follows from the first limit
in \eqref{eq: limits of f at the 2 endpoints of the interval}, jointly with
the monotonicity of this function which is proved in the following.

For a proof that $f_{\alpha, \varepsilon'}(\cdot)$ is strictly
monotonically increasing on $\bigl[\frac{1-\varepsilon'}{2}, 1-\varepsilon')$,
this function (see \eqref{eq: f of the 2 parameters alpha and epsilon})
is expressed as follows:
\begin{align}
f_{\alpha, \varepsilon'}(q) %& = \frac{1}{1+\frac{\varepsilon'}{q}} \,
%\frac{\left(\frac{1-\frac{\varepsilon'}{1-q}}{1+\frac{\varepsilon'}{q}}
%\right)^{\alpha-1}-1}{1-\left(\frac{1-\frac{\varepsilon'}{1-q}}{1+
%\frac{\varepsilon'}{q}} \right)^{\alpha}} \nonumber \\
& = \left(1+\frac{\varepsilon'}{q} \right)^{-1}
u_{\alpha}\bigl(z_{\varepsilon'}(q)\bigr)
\label{eq: 1st way for rewritting f}
\end{align}
where
\begin{align}
& z_{\varepsilon'}(q) \triangleq
\frac{1-\frac{\varepsilon'}{1-q}}{1+\frac{\varepsilon'}{q}} \, ,
\label{eq: z of epsilon and q} \\
& u_{\alpha}(t) \triangleq \left\{
\begin{array}{ll}
\frac{t^{\alpha-1}-1}{1-t^\alpha}, & \quad \mbox{if $t \in
(0,1) \cup (1,\infty)$,} \\[0.2cm]
\tfrac{1-\alpha}{\alpha}, & \quad \mbox{if $t=1$.}
\end{array}
\right.
\label{eq: function u}
\end{align}
Note that $u_{\alpha}$ in \eqref{eq: function u} was
defined to be continuous at $t=1$. In order to proceed, we
need the following two lemmas:

\begin{lemma}
Let $\varepsilon' \in (0,1)$. The function $z_{\varepsilon'}$
in \eqref{eq: z of epsilon and q} is strictly monotonically increasing
on $\bigl(0, \frac{1-\varepsilon'}{2} \bigr]$, and it is strictly
monotonically decreasing on $\bigl[\frac{1-\varepsilon'}{2}, 1-\varepsilon')$.
This function is also positive on $(0, 1-\varepsilon')$.
\label{lemma: properties of the function z}
\end{lemma}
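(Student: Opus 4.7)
My plan is to analyze $z_{\varepsilon'}$ by combining the two nested fractions, computing its derivative with the quotient rule, and reading off the sign.

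The first step is to clear denominators and rewrite
\[
z_{\varepsilon'}(q) = \frac{1-\frac{\varepsilon'}{1-q}}{1+\frac{\varepsilon'}{q}}
= \frac{q(1-q-\varepsilon')}{(1-q)(q+\varepsilon')}
\]
on $(0, 1-\varepsilon')$. Set $N(q) \triangleq q(1-q-\varepsilon')$ and $D(q) \triangleq (1-q)(q+\varepsilon')$. Expanding both gives $N(q) = q - q^2 - q\varepsilon'$ and $D(q) = q + \varepsilon' - q^2 - q\varepsilon'$, so that a direct computation yields the two key identities
\[
N'(q) = D'(q) = 1-2q-\varepsilon', \qquad D(q) - N(q) = \varepsilon'.
\]

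The second step exploits these identities. By the quotient rule,
\[
z'_{\varepsilon'}(q) = \frac{N'(q) D(q) - N(q) D'(q)}{D(q)^2}
= \frac{N'(q)\bigl(D(q)-N(q)\bigr)}{D(q)^2}
= \frac{\varepsilon'\,(1-2q-\varepsilon')}{D(q)^2}.
\]
Since $\varepsilon' > 0$ and $D(q)^2 > 0$ on the open interval, the sign of $z'_{\varepsilon'}(q)$ is that of $1-2q-\varepsilon'$. This is strictly positive for $q < \tfrac{1-\varepsilon'}{2}$ and strictly negative for $q > \tfrac{1-\varepsilon'}{2}$, which gives the asserted strict monotonicity on $\bigl(0,\tfrac{1-\varepsilon'}{2}\bigr]$ and on $\bigl[\tfrac{1-\varepsilon'}{2}, 1-\varepsilon'\bigr)$ respectively.

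The third step is positivity. For $q \in (0, 1-\varepsilon')$, both factors $q$ and $1-q-\varepsilon'$ defining $N(q)$ are strictly positive, and both factors $1-q$ and $q+\varepsilon'$ defining $D(q)$ are strictly positive, so $z_{\varepsilon'}(q) > 0$ on $(0, 1-\varepsilon')$.

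There is no real obstacle here; the argument hinges on the algebraic coincidence that the numerator and denominator of $z_{\varepsilon'}$ share a common derivative and differ by the constant $\varepsilon'$, which collapses the quotient-rule expression to a single linear factor whose sign change occurs precisely at the midpoint $q = \tfrac{1-\varepsilon'}{2}$.
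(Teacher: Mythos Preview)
Your proof is correct. Both you and the paper compute $z'_{\varepsilon'}$ and read off its sign, but the mechanics differ. The paper takes logarithms of \eqref{eq: z of epsilon and q} and differentiates to obtain
\[
z'_{\varepsilon'}(q) = \varepsilon' \, z_{\varepsilon'}(q) \left( \frac{1}{q(\varepsilon'+q)} - \frac{1}{(1-q)(1-\varepsilon'-q)} \right),
\]
then uses the already-established positivity of $z_{\varepsilon'}$ to reduce the sign question to a comparison of the two reciprocals, which balances at $q=\tfrac{1-\varepsilon'}{2}$. You instead clear the nested fractions, spot the coincidence $N'(q)=D'(q)$ and $D(q)-N(q)=\varepsilon'$, and collapse the quotient rule directly to $z'_{\varepsilon'}(q)=\varepsilon'(1-2q-\varepsilon')/D(q)^2$. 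Your route is a bit more self-contained (the sign analysis does not depend on first knowing $z_{\varepsilon'}>0$) and makes the location of the critical point immediate from the linear factor; the paper's logarithmic-derivative form, on the other hand, separates the contributions of numerator and denominator transparently. After simplification the two derivative expressions are in fact identical.
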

\begin{proof}
$z_{\varepsilon'}(q)>0$ for $q \in (0, 1-\varepsilon')$ since
$1-\frac{\varepsilon'}{1-q}>0$, and $1+\frac{\varepsilon'}{q} > 0$.
In order to prove the monotonicity properties of $z_{\varepsilon'}$,
note that its derivative satisfies the equality
\begin{equation}
\frac{\mathrm{d}}{\mathrm{d}q} \; z_{\varepsilon'}(q) = \varepsilon' \, z_{\varepsilon'}(q) \,
\left( \frac{1}{q(\varepsilon'+q)} - \frac{1}{(1-q)(1-\varepsilon'-q)} \right)
\label{eq: derivative of z}
\end{equation}
which is derived by taking logarithms on both sides of
\eqref{eq: z of epsilon and q}, followed by their differentiation. By
setting the derivative of $z_{\varepsilon'}(q)$ (with respect to $q$)
to zero, we have $q=\frac{1-\varepsilon'}{2}$. Since
$z_{\varepsilon'}(q)>0$ for $q \in (0, 1-\varepsilon')$, it follows from
\eqref{eq: derivative of z} that $z'_{\varepsilon'}(q) > 0$ for
$q \in \bigl(0, \frac{1-\varepsilon'}{2} \bigr)$, and $z'_{\varepsilon'}(q) < 0$
for $q \in \bigl(\frac{1-\varepsilon'}{2}, 1-\varepsilon')$. Hence,
$z_{\varepsilon'}$ is strictly monotonically increasing
on $\bigl(0, \frac{1-\varepsilon'}{2} \bigr]$, and it is strictly
monotonically decreasing on $\bigl[\frac{1-\varepsilon'}{2}, 1-\varepsilon')$.
\end{proof}

\begin{lemma}
Let $\alpha \in (0,1)$.
The function $u_{\alpha}$ in \eqref{eq: function u} is strictly monotonically
decreasing and positive on $(0, \infty)$.
\label{lemma: properties of the function u}
\end{lemma}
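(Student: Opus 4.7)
The plan is to analyze $u_{\alpha}$ on the two open intervals $(0,1)$ and $(1,\infty)$, where it admits the explicit rational expression in \eqref{eq: function u}, and then to glue the resulting monotonicity across $t=1$ using continuity. Positivity is immediate from sign analysis: for $\alpha\in(0,1)$, the numerator $t^{\alpha-1}-1$ and the denominator $1-t^{\alpha}$ are both positive on $(0,1)$ and both negative on $(1,\infty)$, so their ratio is positive; at $t=1$ the prescribed value $(1-\alpha)/\alpha$ is positive as well, and agrees (by l'H\^{o}pital) with $\lim_{t\to 1}u_{\alpha}(t)$.

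For strict monotonicity, I will differentiate the rational expression. A short computation that factors out $t^{\alpha-2}/(1-t^{\alpha})^2$ yields
\begin{equation}
u_{\alpha}'(t) \; = \; \frac{t^{\alpha-2}\,\phi_{\alpha}(t)}{(1-t^{\alpha})^2}, \qquad \phi_{\alpha}(t) \triangleq t^{\alpha} - \alpha t + (\alpha-1),
\end{equation}
so on $(0,\infty)\setminus\{1\}$ the sign of $u_{\alpha}'$ is exactly the sign of $\phi_{\alpha}$.

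The key step is therefore to show that $\phi_{\alpha}(t)<0$ for every $t\in(0,\infty)\setminus\{1\}$. Here $\phi_{\alpha}(1)=0$ and $\phi_{\alpha}'(t)=\alpha(t^{\alpha-1}-1)$, which vanishes only at $t=1$. Since $\alpha\in(0,1)$, we have $t^{\alpha-1}>1$ on $(0,1)$ and $t^{\alpha-1}<1$ on $(1,\infty)$, so $\phi_{\alpha}$ is strictly increasing on $(0,1)$ and strictly decreasing on $(1,\infty)$, with a unique global maximum of $0$ at $t=1$. (Equivalently, one can note that $\phi_{\alpha}''(t)=\alpha(\alpha-1)t^{\alpha-2}<0$, so $\phi_{\alpha}$ is strictly concave on $(0,\infty)$ and its only zero of $\phi_{\alpha}'$ is indeed a strict maximum.) Consequently $\phi_{\alpha}(t)<0$ for all $t>0$, $t\ne 1$, and hence $u_{\alpha}'(t)<0$ on $(0,1)\cup(1,\infty)$.

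Finally, strict monotone decrease on each of $(0,1)$ and $(1,\infty)$ combined with continuity of $u_{\alpha}$ at $t=1$ (guaranteed by the definition in \eqref{eq: function u}) yields strict monotone decrease on the whole half-line $(0,\infty)$. I do not anticipate any real obstacle; the only minor care is in performing the derivative computation cleanly and in justifying the strict inequality at $t=1$ via the continuity argument rather than via a direct differentiation at the removable singularity.
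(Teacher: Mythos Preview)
Your proposal is correct and is essentially the same as the paper's proof: both differentiate to obtain $u_{\alpha}'(t)=t^{\alpha-2}\phi_{\alpha}(t)/(1-t^{\alpha})^{2}$ with $\phi_{\alpha}(t)=t^{\alpha}-\alpha t+\alpha-1$, and then show $\phi_{\alpha}(t)<0$ for $t\neq 1$ via $\phi_{\alpha}'(t)=\alpha(t^{\alpha-1}-1)$. The only cosmetic difference is that the paper deduces positivity from strict decrease together with $\lim_{t\to\infty}u_{\alpha}(t)=0$, whereas you argue it directly from the signs of numerator and denominator; both are immediate.
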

\begin{proof}
Differentiation of $u_{\alpha}$ in \eqref{eq: function u} gives that for $t>0$
\begin{equation}
u'_{\alpha}(t) = \frac{t^{\alpha-2} \, (t^\alpha - \alpha t + \alpha-1)}{(t^\alpha-1)^2}.
\label{eq: derivative of u}
\end{equation}
Note that
$\frac{d}{dt} \left(t^\alpha - \alpha t + \alpha-1 \right) = \alpha (t^{\alpha-1}-1)$,
so the derivative is zero at $t=1$, it is positive if $t \in (0,1)$,
and it is negative if $t \in (1, \infty)$. This implies that
$t^{\alpha}-\alpha t + \alpha-1 \leq 0$ for every $t \in (0, \infty)$, and it is
satisfied with equality if and only if $t=1$. From \eqref{eq: derivative of u}, it
follows that $u_{\alpha}$ is strictly monotonically decreasing on $(0, \infty)$. Since
$\lim_{t \rightarrow \infty} u_{\alpha}(t) = 0$ (see \eqref{eq: function u}) and
$u_{\alpha}$ is strictly monotonically decreasing on $(0, \infty)$ then it is positive
on this interval.
\end{proof}

From Lemmas~\ref{lemma: properties of the function z}
and~\ref{lemma: properties of the function u}, it follows
that $z_{\varepsilon'}$ is strictly monotonically decreasing
and positive on $\bigl[\frac{1-\varepsilon'}{2}, 1-\varepsilon' \bigr)$,
and $u_{\alpha}$ is strictly monotonically decreasing and positive
on $(0, \infty)$. This therefore implies that the composition
$u_{\alpha} \bigl(z_{\varepsilon'}(\cdot) \bigr)$
is strictly monotonically increasing and positive on the interval
$\bigl[\frac{1-\varepsilon'}{2}, 1-\varepsilon' \bigr)$. Hence,
from \eqref{eq: 1st way for rewritting f}, since $f_{\alpha, \varepsilon'}(\cdot)$
is expressed as a product of two positive and strictly monotonically increasing
functions on $\bigl[\frac{1-\varepsilon'}{2}, 1-\varepsilon' \bigr)$, also
$f_{\alpha, \varepsilon'}$ has these properties on this interval. This
completes the first part of the proof where we show that
$f_{\alpha, \varepsilon'}(\cdot)$ is strictly monotonically increasing
and positive on $\bigl[\frac{1-\varepsilon'}{2}, 1-\varepsilon')$.

\vspace*{0.1cm}
We prove in the following that $f_{\alpha, \varepsilon'}(\cdot)$ is also
strictly monotonically increasing and positive on $\bigl(0, \frac{1-\varepsilon'}{2}]$.
For this purpose, the function $f_{\alpha, \varepsilon'}$ is expressed
in the following alternative way:
\begin{align}
f_{\alpha, \varepsilon'}(q) & = \frac{1}{1-\frac{\varepsilon'}{q-1}}
\left(\frac{1-\frac{\varepsilon'}{q-1}}{1+\frac{\varepsilon'}{q}}\right)^{\alpha}
\; \; \frac{1-\left(\frac{1+\frac{\varepsilon'}{q}}{1-\frac{\varepsilon'}{q-1}}
\right)^{\alpha-1}}{1-\left(\frac{1-\frac{\varepsilon'}{1-q}}{1+
\frac{\varepsilon'}{q}}\right)^\alpha} \nonumber \\
& = \left(1-\frac{\varepsilon'}{1-q} \right)^{-1}
r_{\alpha}\bigl(z_{\varepsilon'}(q)\bigr)
\label{eq: 2nd way for rewritting f}
\end{align}
where $z_{\varepsilon'}$ is defined in \eqref{eq: z of epsilon and q}, and
\begin{align}
r_{\alpha}(t) \triangleq \left\{
\begin{array}{ll}
\frac{t^\alpha (1-t^{1-\alpha})}{1-t^\alpha}, & \quad \mbox{if $t \in
(0, \infty) \setminus \{1\}$,} \\[0.1cm]
\tfrac{1-\alpha}{\alpha}, & \quad \mbox{if $t=1$.}
\end{array}
\right.
\label{eq: function r}
\end{align}
Note that it follows from Lemma~\ref{lemma: properties of the function z}
and \eqref{eq: z of epsilon and q} that
$$z_{\varepsilon'}(q) \leq z_{\varepsilon'}\left(\frac{1-\varepsilon'}{2}\right)
= \left(\frac{1-\varepsilon'}{1+\varepsilon'}\right)^2 < 1$$
so the composition $r_{\alpha}\bigl(z_{\varepsilon'}(\cdot)\bigr)$ in
\eqref{eq: 2nd way for rewritting f} is independent of $r_{\alpha}(1)$; the
value of $r_{\alpha}(1)$ is defined in \eqref{eq: function r} to obtain the
continuity of $r_{\alpha}$, which leads to the following lemma:

\begin{lemma}
For $\alpha \in (0,1)$, the function $r_{\alpha}$ in \eqref{eq: function r}
is strictly monotonically increasing and positive on $(0, \infty)$.
\label{lemma: properties of the function r}
\end{lemma}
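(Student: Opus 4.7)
The plan is to rewrite $r_\alpha$ in a form that is cleaner for differentiation, establish positivity by a direct sign analysis on the two intervals separated by $t=1$, and then establish strict monotonicity by showing that the numerator of $r_\alpha'$ has a unique zero at $t=1$ which is also its minimum.

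First I would observe that for $t \in (0,\infty) \setminus \{1\}$ one can factor $t^\alpha$ in the numerator of \eqref{eq: function r} to obtain
\begin{equation}
r_\alpha(t) = \frac{t^\alpha - t}{1-t^\alpha},
\end{equation}
which is the form I will work with. Positivity is then immediate: for $t \in (0,1)$, since $\alpha \in (0,1)$ we have $t^\alpha > t$ so the numerator is positive, and $1-t^\alpha > 0$, hence $r_\alpha(t) > 0$; for $t > 1$, both numerator and denominator are negative, so again $r_\alpha(t) > 0$. At $t=1$ positivity is clear from $r_\alpha(1) = \tfrac{1-\alpha}{\alpha} > 0$.

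For the monotonicity, I would differentiate $r_\alpha$ on $(0,\infty) \setminus \{1\}$ using the quotient rule; after cancellation the numerator becomes
\begin{equation}
h(t) \triangleq \alpha t^{\alpha-1} + (1-\alpha) t^\alpha - 1,
\end{equation}
divided by the strictly positive factor $(1-t^\alpha)^2$. It thus suffices to show $h(t) > 0$ for all $t \in (0,\infty) \setminus \{1\}$. A direct computation gives $h(1) = 0$ and $h'(t) = \alpha(1-\alpha) t^{\alpha-2}(t-1)$, so $h$ is strictly decreasing on $(0,1)$, strictly increasing on $(1,\infty)$, and attains its unique minimum at $t=1$ where $h(1)=0$. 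Hence $h(t) > 0$ whenever $t \neq 1$, which yields $r_\alpha'(t) > 0$ on $(0,\infty) \setminus \{1\}$.

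Finally, since $r_\alpha$ is continuous on the whole interval $(0,\infty)$ (the value at $t=1$ is chosen in \eqref{eq: function r} precisely to make this so, by L'H\^opital applied to the indeterminate form $0/0$), strict monotonicity on the two subintervals $(0,1)$ and $(1,\infty)$ together with continuity at $t=1$ upgrades to strict monotonicity on the whole $(0,\infty)$. I do not anticipate a serious obstacle here: the only care needed is to notice that both factors in the alternate form of $r_\alpha$ change sign simultaneously across $t=1$ and to verify that $h$ attains its global minimum exactly at the removable singularity $t=1$.
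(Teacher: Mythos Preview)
Your proof is correct and follows essentially the same route as the paper: both compute $r_\alpha'$, identify the same numerator $h(t) = \alpha t^{\alpha-1} + (1-\alpha)t^\alpha - 1$ over $(1-t^\alpha)^2$, and show via $h'(t) = \alpha(1-\alpha)t^{\alpha-2}(t-1)$ that $h$ has its global minimum $0$ at $t=1$. The only cosmetic differences are that you establish positivity by a direct sign check on the factored form $r_\alpha(t) = (t^\alpha - t)/(1-t^\alpha)$, whereas the paper deduces it from monotonicity together with $\lim_{t\to 0} r_\alpha(t) = 0$, and you make the continuity-at-$t=1$ step explicit.
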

\begin{proof}
A differentiation of $r_{\alpha}$ in \eqref{eq: function r} gives
\begin{equation}
r'_{\alpha}(t) = \frac{(1-\alpha)t^{\alpha} + \alpha
t^{\alpha-1} - 1}{(t^\alpha-1)^2}
\label{eq: derivative of r}
\end{equation}
so the sign of $r'_{\alpha}$ is the same as of
$(1-\alpha) t^{\alpha} + \alpha t^{\alpha-1} - 1$.
Since $\alpha \in (0,1)$, and
$$\frac{d}{dt} \bigl((1-\alpha) t^{\alpha} + \alpha t^{\alpha-1} - 1 \bigr)
= \alpha (1-\alpha) t^{\alpha-2} (t-1)$$
it follows that the last derivative is negative for $t \in (0,1)$,
zero at $t=1$, and positive for $t \in (1, \infty)$. This implies
that $t=1$ is a global minimum of the numerator of $r'_\alpha$
(see \eqref{eq: derivative of r}), so
$$ (1-\alpha) t^{\alpha} + \alpha \, t^{\alpha-1} - 1 \geq 0,
\quad \forall \, t \in (0, \infty)$$
and equality holds if and only if $t=1$. It therefore
follows from \eqref{eq: derivative of r} that $r'_{\alpha}(t) > 0$
for $t \in (0, \infty) \setminus \{1\}$, so $r_{\alpha}(\cdot)$
is strictly monotonically increasing on $(0, \infty)$.
Since $\lim_{t \rightarrow 0} r_{\alpha}(t) = 0$, the monotonicity of
$r_{\alpha}(\cdot)$ on $(0, \infty)$ yields that it is positive on
this interval.
\end{proof}

From Lemmas~\ref{lemma: properties of the function z}
and~\ref{lemma: properties of the function r}, $z_{\varepsilon'}$
is strictly monotonically increasing and positive on
$\bigl(0, \frac{1-\varepsilon'}{2}\bigr]$, and $r_{\alpha}$
is strictly monotonically increasing and positive on $(0, \infty)$.
This implies that the composition
$r_{\alpha}\bigl( z_{\varepsilon'}(\cdot) \bigr)$
is strictly monotonically increasing and positive on the interval
$\bigl(0, \frac{1-\varepsilon'}{2}\bigr]$. From
\eqref{eq: 2nd way for rewritting f}, $f_{\alpha, \varepsilon'}$
is expressed as a product of two strictly increasing and positive
functions on the interval $\bigl(0, \frac{1-\varepsilon'}{2} \bigr]$,
which implies that $f_{\alpha, \varepsilon'}(\cdot)$
also has these properties on this interval. This completes the
second part of the proof where we show that
$f_{\alpha, \varepsilon'}(\cdot)$ is strictly monotonically increasing
and positive on $\bigl(0, \frac{1-\varepsilon'}{2}\bigr]$.
The combination of the two parts of this proof
completes the proof of Lemma~\ref{lemma: monotonicity of f}.

\section{Proof of Proposition~\ref{proposition: efficient calculation of g_alpha}}
\renewcommand{\thelemma}{IV.\arabic{lemma}}
\renewcommand{\theequation}{IV.\arabic{equation}}
\setcounter{equation}{0}
\setcounter{lemma}{0}
\label{Appendix: proof of proposition on efficient calculation of g}

The proof relies on the Lagrange duality and KKT conditions, where strong duality is
first asserted by verifying the satisfiability of Slater's condition.

Let $\alpha \in (0,1)$, $\varepsilon \in (0,2)$, and $\varepsilon' = \frac{\varepsilon}{2}$.
Solving \eqref{eq: optimization problem 1 for the Renyi divergence}
is equivalent to solving the optimization problem
\begin{align}
& \text{maximize} \quad p^{\alpha} q^{1-\alpha} + (1-p)^{\alpha} (1-q)^{1-\alpha} \nonumber \\
& \text{subject to} \label{eq: equivalent optimization problem for g} \\
& \qquad \qquad \left\{
\begin{array}{ll}
p, q \in [0,1], \nonumber \\[0.1cm]
|p-q| \geq \varepsilon' \nonumber
\end{array}
\right.
\end{align}
where $p, q$ are the optimization variables. The objective
function of the optimization problem \eqref{eq: equivalent optimization problem for g}
is concave for $\alpha \in (0,1)$, so this maximization problem
is a convex optimization problem. Since the problem is also strictly feasible
at an interior point of the domain in \eqref{eq: equivalent optimization problem for g},
Slater's condition yields that strong duality holds for this optimization
problem (see \cite[Section~5.2.3]{CVX_book}).
Note that the replacement of $p,q$ with $1-p$ and $1-q$, respectively,
does not affect the value of the objective function and the satisfiability
of the constraints in \eqref{eq: equivalent optimization problem for g}.
Consequently, it can be assumed with loss of generality that $p \geq q$;
together with the inequality constraint $|p-q| \geq \varepsilon'$, it gives
that $p-q \geq \varepsilon'$. The Lagrangian of the dual problem is given by
$$L(p,q,\lambda) = p^{\alpha} q^{1-\alpha} + (1-p)^{\alpha} (1-q)^{1-\alpha}
+ \lambda (q-p+\varepsilon')$$
and the KKT conditions lead to the following set of equations:
\begin{align}
\left\{
\begin{array}{ll}
& \frac{\partial L}{\partial p} =  \alpha \bigl[p^{\alpha-1} q^{1-\alpha}
- (1-p)^{\alpha-1} (1-q)^{1-\alpha} \bigr] - \lambda = 0, \\[0.2cm]
& \frac{\partial L}{\partial q} = (1-\alpha) \bigl[ p^{\alpha} q^{-\alpha}
- (1-p)^{\alpha} (1-q)^{-\alpha} \bigr] + \lambda = 0, \\[0.2cm]
& \frac{\partial L}{\partial \lambda} = q-p+\varepsilon' = 0.
\end{array}
\right.
\label{eq: set of KKT equations}
\end{align}
Eliminating $\lambda$ from the first equation in \eqref{eq: set of KKT equations},
and substituting it into the second equation gives
\begin{equation}
(1-\alpha) \left[ \Bigl(\frac{p}{q}\Bigr)^{\alpha} - \Bigl(\frac{1-p}{1-q}\Bigr)^{\alpha} \right]
+ \alpha \left[ \Bigl(\frac{p}{q}\Bigr)^{\alpha-1} - \Bigl(\frac{1-p}{1-q}\Bigr)^{\alpha-1} \right] = 0.
\label{eq: equation 1 from 1st and 2nd KKT conditions}
\end{equation}
From the third equation of \eqref{eq: set of KKT equations},
Substituting $p=q+\varepsilon'$ into \eqref{eq: equation 1 from 1st and 2nd KKT conditions},
and re-arranging terms gives the equation
$f_{\alpha, \varepsilon'}(q) = \tfrac{1-\alpha}{\alpha},$
where $f_{\alpha, \varepsilon'}$ is the function in \eqref{eq: f of the 2 parameters alpha and epsilon}.

\section{Proof of Lemma~\ref{lemma: identity for the Renyi divergence}}
\renewcommand{\thelemma}{V.\arabic{lemma}}
\renewcommand{\theequation}{V.\arabic{equation}}
\setcounter{equation}{0}
\setcounter{lemma}{0}
\label{Appendix: proof of the lemma with the identity for the Renyi divergence}
For $\alpha \in (0, \infty) \setminus \{1\}$, the following equalities hold:
{\small \begin{align*}
& D(Q \| P_2) + \tfrac{\alpha}{1-\alpha} \cdot D(Q \| P_1) + \tfrac{1}{\alpha-1} \cdot D(Q \| Q_{\alpha}) \\[0.1cm]
& \stackrel{\text{(a)}}{=} \frac1{\alpha-1} \, \dint_{\set{A}} \mathrm{d}Q \, \log \left(\frac{\mathrm{d}Q}{\mathrm{d}P_2}\right)^{\alpha-1} +
\frac{1}{1-\alpha} \, \dint_{\set{A}} \mathrm{d}Q \, \log \left(\frac{\mathrm{d}Q}{\mathrm{d}P_1}\right)^\alpha +
\frac1{\alpha-1} \, \dint_{\set{A}} \mathrm{d}Q \, \log \left(\frac{\mathrm{d}Q}{\mathrm{d}Q_\alpha}\right) \\[0.1cm]
& \stackrel{\text{(b)}}{=} \frac{1}{\alpha-1} \, \dint_{\set{A}} \mathrm{d}Q(x) \, \log \left( \left(\frac{\mathrm{d}P_1}{\mathrm{d}Q} \, (x)\right)^{\alpha} \left(\frac{\mathrm{d}P_2}{\mathrm{d}Q} \, (x)\right)^{1-\alpha} \left(\frac{\mathrm{d}Q_\alpha}{\mathrm{d}Q} \, (x)\right)^{-1} \right) \\
& \stackrel{\text{(c)}}{=} \frac{1}{\alpha-1} \, \dint_{\set{A}} \mathrm{d}Q(x) \; \log \left( \dint_{\set{A}}{\left(\frac{\mathrm{d}P_1}{\mathrm{d}Q} \, (u)
\right)^\alpha} \, \left(\frac{\mathrm{d}P_2}{\mathrm{d}Q} \, (u)\right)^{1-\alpha} \, \mathrm{d}Q(u)\right) \\[0.1cm]
& \stackrel{\text{(d)}}{=} \frac{1}{\alpha-1} \; \log \left( \dint_{\set{A}}{\left(\frac{\mathrm{d}P_1}{\mathrm{d}Q} \, (u)
\right)^\alpha} \, \left(\frac{\mathrm{d}P_2}{\mathrm{d}Q} \, (u)\right)^{1-\alpha} \, \mathrm{d}Q(u)\right) \\[0.1cm]
& \stackrel{\text{(e)}}{=} D_{\alpha}(P_1 \| P_2)
\end{align*}}
where (a) follows from the equality
\begin{align}
\label{eq:RD2}
D_{\alpha}(P \| Q) &= \frac1{\alpha-1} \; \log \left( \int \mathrm{d}R \left(\frac{\mathrm{d}P}{\mathrm{d}R}\right)^\alpha \left(\frac{\mathrm{d}Q}{\mathrm{d}R}\right)^{1-\alpha} \right)
\end{align}
where $R$ is an arbitrary probability measure such that $P, Q \ll R$;
(b) holds since $P_1, P_2, Q$ are mutually absolutely
continuous which also yields that $Q \ll \gg Q_{\alpha}$ (in view of \eqref{eq: optimized tilted measure}),
(c) follows from \eqref{eq: optimized tilted measure}, (d) holds since $Q$ is a probability measure, and
(e) follows from \eqref{eq:RD2} (recall that $Q \ll P_1, P_2$).

\subsection*{Acknowledgment}
Sergio Verd\'{u} is acknowledged for a stimulating discussion on this work.
The author would like to acknowledge one of the three anonymous reviewers for some detailed and 
constructive editorial comments.

\end{document}